\newtheorem{theorem}{Theorem}[section]
\newtheorem{proposition}[theorem]{Proposition}
\newtheorem{conjecture}[theorem]{Conjecture}
\newtheorem{remark}[theorem]{Remark}
\newcommand{\plusminus}{\pm}
\newcommand{\beqa}{\begin{eqnarray*}}
\newcommand{\eeqa}{\end{eqnarray*}\par\noindent}
\newcommand{\ket}[1]{{|} #1\rangle}
\newcommand{\IMP}{\; \Rightarrow \;}
\title{A classification of multipartite states by degree of non-locality}
\author{Samson Abramsky
\institute{Department of Computer Science\\ University of Oxford}
\email{samson.abramsky@cs.ox.ac.uk}
\and Carmen Constantin
\institute{Department of Computer Science\\ University of Oxford}
\email{carmen.constantin@univ.ox.ac.uk}
}
\begin{document}

\maketitle

\begin{abstract}
We propose a novel form of classification of multipartite states, in terms of the maximum degree of non-locality they can exhibit under any choice of local observables. 
This uses the hierarchy of notions previously introduced by Abramsky and Brandenburger: strong contextuality, logical contextuality, and probabilistic contextuality. 

We study n-qubit pure states. 
We conjecture that for more than 2 parties, all entangled states are logically contextual. 
We prove a number of results in support of this conjecture: 
(1) We show that \emph{all permutation-symmetric states are logically non-local}. 
(2) We study the class of \emph{balanced states with functional dependencies}. These states are described by Boolean functions and have a rich structure, allowing a detailed analysis, which again confirms the conjecture in this case. 
\end{abstract}

\section{Introduction}

A general understanding of the structure of multipartite entangled quantum states has proved elusive.
The picture given by the \textsc{SLOCC} classification \cite{dur2000three,lamata2006inductive} does not yield much insight beyond the tripartite case. Thus it seems worthwhile to consider other approaches.

Our starting point is the hierarchy of empirical models established in \cite{abramsky2011unified}.
An empirical model is a probability table describing measurement outcomes, familiar from Bell-type theorems. Such a table can be realized in quantum mechanics by fixing a multipartite state, and a set of local observables at each site.

In \cite{abramsky2011unified}, a general approach was developed which unifies the study of non-locality and contextuality.
One of the key points which emerged from this analysis is that three grades or degrees of contextuality/non-locality for empirical models can be distinguished, and shown to form a strict hierarchy:
\begin{itemize}
\item A model is \emph{strongly contextual} if its support has no global section; that is, there is no simultaneous assignment of outcomes to all the measurements whose restriction to each compatible set of measurements is in the support. Strong contextuality has a number of equivalent characterizations. In quantum mechanics, the model generated by the GHZ state with $X$ and $Y$ local observables provides a standard example of strong contextuality; while PR boxes \cite{popescu1994quantum} are super-quantum devices exhibiting strong contextuality.
\item A model is \emph{logically} (or \emph{relationally} or \emph{possibilistically}) \emph{contextual} if the following holds.
Let $U_1, \ldots , U_N$ be the compatible families of measurements.
Let $S_i$ be the support of the model on the joint outcomes for $U_i$.
Then for some $j$, there is a proper subset $S$ of $S_j$ such that the set of global sections which are compatible with $S_i$, $i \neq j$, all restrict to $S$ at $U_j$.
This says that there are events in the support at $U_j$ which are not consistent with the supports of the other measurement contexts, when viewed as constraints on a putative global section or hidden variable.
This condition captures in a precise way the idea of giving a proof of Bell's theorem without inequalities or probabilities \cite{greenberger1990bell,mermin1990quantum,cabello2001bell,hardy1993nonlocality,zimba1993bell}. It is characteristic of the well-known Hardy construction \cite{hardy1993nonlocality}, which is logically but not strongly contextual.
\item Finally, a model is \emph{weakly contextual} if it is contextual, but neither strongly nor logically contextual.
\end{itemize}
These notions form a proper hierarchy. 
In our setting, non-locality is a special case of contextuality.
Strong contextuality implies logical contextuality, which implies contextuality in the usual sense. 
There are weakly contextual models which are not logically contextual, and logically contextual models which are not strongly contextual.

We now turn to the issue of classifying quantum states in terms of their non-locality properties.\footnote{Since we are in the case of Bell-type scenarios as standardly discussed in non-locality theory, we shall use the terminology of non-locality rather than contextuality.}
In particular, we shall focus on $n$-qubit pure  states. If we fix local observables for each party, such a state gives rise to a probability model as above.
We can lift the properties of models to states. 
\begin{itemize}
\item We say that a state is strongly non-local if for \emph{some} choice of local observables for each party, the resulting empirical model is strongly contextual.
\item We can similarly define logical non-locality for states; we say that a state is logically non-local if for some choice of local observables, the resulting empirical model is logically non-local; while the state is \emph{not} strongly non-local.
\item Finally, a state is probabilistically non-local if it is non-local, but neither of the previous two cases apply.
\end{itemize}

This gives rise to a natural and challenging problem:
\begin{center}
\fbox{Characterize the multipartite states in terms of their maximum degree of non-locality.}
\end{center}
We believe that an answer to this problem will shed considerable light on the structure of multipartite states, not least because it will necessitate solving the following task:
\begin{center}
\fbox{Given a multipartite state, find local observables which witness its highest degree of non-locality.}
\end{center}

This problem motivates the following
\begin{conjecture}
\label{conj}
For every $n > 2$, every $n$-partite entangled state is logically non-local.
\end{conjecture}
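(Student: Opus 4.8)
The plan is to reduce the general $n$-partite problem to the bipartite Hardy construction by collapsing all but two of the parties with a suitable local measurement. Recall that Hardy's argument \cite{hardy1993nonlocality} shows that \emph{every} two-qubit pure state that is entangled but not maximally entangled is logically (but not strongly) contextual for an appropriate choice of two local observables per party; only the maximally entangled state escapes, and the extra room available for $n>2$ parties should let one steer away from it. The first ingredient is an \textbf{inheritance lemma}: if $|\psi\rangle$ is an $n$-partite state, $R \subsetneq \{1,\dots,n\}$, and there is a choice of local observables on the parties in $R$ and a joint outcome $\vec o$ of positive probability such that the post-measurement state on the parties outside $R$ is logically contextual for some local observables there, then $|\psi\rangle$ is logically contextual for some choice of local observables. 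One proves this by giving each party in $R$ the single observable it was measured with: any global assignment for the resulting $n$-partite model then reads a fixed tuple at $R$, which is forced to be $\vec o$ once one demands compatibility with the context witnessing a given support event of the conditioned model; so restricting such an assignment to the complement of $R$ yields a global assignment witnessing non-logical-contextuality of the conditioned model, which is exactly the empirical model of the post-measurement state. Contrapositively, logical contextuality propagates up.

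Given the lemma, the task reduces to the following \textbf{projection claim}: every genuinely multipartite entangled $n$-qubit state, $n \ge 3$, admits local observables on some set of $n-2$ parties together with a positive-probability joint outcome for which the conditional state of the remaining two qubits is entangled but not maximally entangled. Granting it, Hardy's construction on those two qubits, with the other parties frozen at the witnessing outcome, combined with the inheritance lemma, shows the state is at least logically non-local. States that are not genuinely multipartite entangled are handled by first splitting off a biseparable tensor factor: a block on three or more parties is genuinely entangled and one recurses on it (applying the lemma with $R$ the rest and any outcome there, since the block is decoupled); a two-qubit block that is entangled but not maximally entangled is dealt with directly by Hardy and the lemma. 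What remains --- a state all of whose multi-qubit blocks are two-qubit \emph{maximally} entangled pairs --- is precisely the configuration for which a single block is known \emph{not} to be logically non-local; a lone such pair padded with product qubits is a genuine exception, so in the conjecture ``entangled'' should be read as genuinely multipartite entangled (equivalently, one excludes that degenerate form), while the status of states built from two or more Bell pairs is an interesting open sub-question in its own right.

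The projection claim is where the real difficulty lies. As the parameters of the local measurements on the chosen $n-2$ parties vary, the conditional two-qubit state varies analytically, and both ``product across the two remaining qubits'' and ``maximally entangled across them'' carve out proper analytic subvarieties of the parameter space; the content of the claim is that genuine multipartite entanglement prevents the conditional state from being identically a product, and that one can steer, qubit by qubit, into the open dense region where it is entangled but non-maximal. Dimension counting alone will not rule out pathological states for which every sequence of local measurements drives the surviving pair onto the ``bad'' locus, and supplying the missing structural input --- presumably via a normal form for $n$-qubit states against which the steering can be analysed --- is the crux. The two families already settled, permutation-symmetric states and balanced states with functional dependencies, can be read as confirmations of exactly this programme in cases where a transparent normal form is available, which is encouraging evidence that the projection claim, suitably phrased, is the right target.
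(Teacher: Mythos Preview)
The statement you are attempting to prove is labelled \textbf{Conjecture} in the paper, and the paper does \emph{not} prove it; it offers partial evidence (the permutation-symmetric case in Section~2 and the balanced functionally-dependent case in Section~3) and, in the Final Remarks, announces a forthcoming result \cite{ACY} whose content is exactly the dichotomy you anticipate: the $n$-qubit pure states split into those that are (up to permutation) tensor products of single qubits and two-qubit maximally entangled pairs, and those that are logically contextual. So there is no ``paper's proof'' to compare against; your proposal should be read as a proof \emph{strategy} for the conjecture, and assessed as such.

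As a strategy it is coherent but, by your own account, incomplete. The inheritance lemma is essentially correct: if one freezes the parties in $R$ to a single observable and picks a support event of the form $(\vec{o},s)$ coming from the Hardy witness $s$ in the conditioned model, any global section extending it is forced to assign $\vec{o}$ at $R$, and its restriction to the complement then yields a global section of the conditioned model extending $s$, contradicting logical contextuality there. The substantive gap is the \emph{projection claim}. You state it, argue that the bad loci (product or maximally entangled conditional pair) are proper analytic subvarieties, and then concede that ``dimension counting alone will not rule out pathological states'' and that a structural normal form is needed. That is an honest assessment, but it means the proposal is a reduction to an open sub-problem, not a proof. Your observation that states built from Bell pairs and product qubits are genuine exceptions is correct and matches the $P(n)$ class in the paper's Final Remarks; it shows the conjecture as literally stated needs the qualification you suggest.

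In short: the paper contains no proof for you to diverge from; your inheritance-plus-projection programme is a reasonable line of attack and is in fact close in spirit to what the authors report is carried out in \cite{ACY}, but the projection claim remains the crux and is not established here.
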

Part of the thinking behind this conjecture is that the bipartite case may actually be anomalous within the landscape of multipartite entangled states. For example, the only strongly contextual bipartite models are the PR-boxes, which are of course not quantum realizable. By contrast, for all $n>2$, the $n$-partite GHZ states are strongly contextual \cite{abramsky2011unified}. Moreover, it is known that in the bipartite case, all entangled states \emph{except} the maximally entangled ones admit Hardy arguments, and hence are logically contextual \cite{hardy1993nonlocality}; and it seems to be folklore that this holds generally, and that a Hardy-type argument requires some symmetry-breaking.
However, as we shall see in the next section, for $n>2$ a different picture emerges.

In the remainder of this extended abstract, we shall report on progress towards proving the conjecture:
\begin{itemize}
\item In Section~2, we shall show that \emph{all permutation-symmetric states are logically non-local}.
This makes use of results from \cite{wang2012nonlocality}, which imply that all non-maximally permutation-symmetric states are logically contextual, combined with a direct argument to show that the Dicke states \cite{dicke1954coherence}, the maximally permutation symmetric states, are logically contextual with respect to $X$ and $Z$ local observables.
\item In Section~3, we consider a class of highly non-permutation-symmetric entangled states, the \emph{balanced states with functional dependencies}. These states are described by Boolean functions, and have a rich structure, allowing a detailed analysis, which again confirms the conjecture in this case.
\end{itemize}

\section{Permutation-symmetric states}

A permutation-symmetric $n$-qubit state is one which is invariant under the action of the full symmetry group $S_n$.
A natural basis for the permutation-symmetric states is provided by the  \emph{Dicke states} \cite{dicke1954coherence}, which are also physically significant. 
For each $n \geq 2$, $0 < k < n$ we define:
\[ S(n, k) \; := \; K \sum_{\mbox{{\small perm}}} \ket{0^{k}1^{n-k}} . \]
Here $K = {n \choose k}^{-1/2}$ is a normalization constant, and we sum over all products of $k$ $0$-kets and $n-k$ $1$-kets.

The well-known $W$ state is the $S(3,2)$ Dicke state in the above notation. 

\begin{proposition}
\label{Dickeprop}
For each $n > 2$, and $0 < k < n$, the Dicke state $S(n,k)$ is logically non-local.
\end{proposition}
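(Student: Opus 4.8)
The plan is to prove logical non-locality of $S(n,k)$ using only the $X$ and $Z$ observables at each of the $n$ sites, so that we work in the $n$-party Bell scenario with two dichotomic measurements per party. Since $X^{\otimes n}$ is a local unitary sending $S(n,k)$ to $S(n,n-k)$ and fixing the $X$ and $Z$ eigenbases (relabelling only the $Z$-outcomes), we may assume $k \leq n-2$; as $n>2$, this covers every $0<k<n$.

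First I would compute the relevant parts of the support. In the all-$Z$ context the support is exactly the set of strings with precisely $k$ zeros. In the context where parties $i,j$ measure $X$ and all others measure $Z$, expanding $\ip{\text{outcome}}{S(n,k)}$ shows that every outcome lies in the support \emph{except} those in which the two $X$-outcomes disagree while exactly $k-1$ of the $Z$-outcomes equal $0$: in that case the only two surviving terms are proportional to $(-1)^{x_i}$ and $(-1)^{x_j}$ and cancel. Next I would analyse global sections. A global section assigns each party $\ell$ a $Z$-value $z_\ell$ and an $X$-value $x_\ell$; its restriction to the all-$Z$ context forces the zero-set $Z_0=\{\ell : z_\ell=0\}$ to have size $k$, hence to be a nonempty proper subset of the parties (this is where $0<k<n$ is used). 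For any $i\in Z_0$ and $j\notin Z_0$ the remaining $Z$-outcomes contain exactly $k-1$ zeros, so the support constraint above forces $x_i=x_j$; ranging over all such pairs forces \emph{all} $X$-values of any global section to coincide.

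To finish, I would exhibit the offending local section. In the context where parties $1,2$ measure $X$ and the rest measure $Z$, consider the outcome with $x_1\neq x_2$ and with exactly $k$ zeros among $z_3,\dots,z_n$ (available since $k\leq n-2$): a single term contributes to its amplitude, which is therefore nonzero, so it is in the support; but it breaks the ``all $X$-values equal'' condition, so no global section restricts to it, and the model is logically contextual. It is not strongly contextual, since $z_\ell=0$ for $\ell\leq k$, $z_\ell=1$ for $\ell>k$, and $x_\ell=0$ for all $\ell$ is a global section: its restriction to any context with $X$-set $A$ has amplitude $K\,2^{-|A|/2}\binom{|A|}{\,|Z_0\cap A|\,}\neq 0$. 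Hence $S(n,k)$ is logically non-local. The one load-bearing idea is that, because $Z_0$ is a nonempty proper subset, the two-$X$ contexts already pin down every $X$-value of any hidden assignment, so an outcome breaking that symmetry cannot extend globally; the amplitude bookkeeping for the two-$X$ contexts is routine, and the only mild care needed is the reduction to $k\leq n-2$, which the flip symmetry supplies uniformly for all $n>2$.
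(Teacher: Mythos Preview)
Your argument is correct and shares its core with the paper's: fix $X$ and $Z$ observables, use the two-$X$ contexts together with the all-$Z$ context to force all $X$-outcomes of any global section to coincide, then show the support is strictly larger than this allows. The difference is only in this last step. The paper works in the all-$X$ row: the two constant outcomes there have total probability ${n\choose k}/2^{n-1}$, and the strict inequality ${n\choose k}<2^{n-1}$ (which is exactly where $n>2$ enters) shows that other outcomes must lie in the support; as a bonus this yields the quantitative violation $1-{n\choose k}/2^{n-1}$ of a logical Bell inequality. You instead produce an explicit offending section inside a two-$X$ row, for which you need $k\le n-2$, obtained via the $X^{\otimes n}$ symmetry swapping $k\leftrightarrow n-k$; this is where \emph{you} use $n>2$. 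Your route is slightly more elementary (no probability computation, no binomial identity), and you additionally exhibit a global section to confirm the model is not strongly contextual, which the paper omits.

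One harmless inaccuracy: in the two-$X$ context it is not true that \emph{every} outcome other than those with unequal $X$-values and exactly $k{-}1$ $Z$-zeros lies in the support; outcomes whose $Z$-part has fewer than $k{-}2$ or more than $k$ zeros also have zero amplitude. Your proof never relies on this full characterisation, only on the two facts you actually use: that the $(k{-}1)$-zero, unequal-$X$ outcomes are excluded, and that your specific $k$-zero witness is included.
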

\begin{proof}
Note that we exclude the cases $k=0$ and $k=n$, since in these cases $S(n, k)$ = $\ket{0^n}$ or $\ket{1^n}$, and these are obviously product states. We also exclude the bipartite case, for which $S(2,1)$ is the EPR state $\frac{\ket{01} + \ket{10}}{\sqrt{2}}$. The bipartite case seems anomalous in a number of respects.

We shall fix the observables $X$ and $Z$ in each party.

A Dicke state $S(n, k)$ gives rise to an $(n, 2, 2)$ probability model, with a choice of two dichotomic observables, $X$ or $Z$, at each site. This table has $2^n$ rows, corresponding to the possible choices of an observable at each site. We shall focus firstly on the $\frac{n(n-1)}{2}$ rows $r_{ij}$, where $X$ observables are selected at sites $i$ and $j$, and $Z$ observables at the remaining sites. Let $S_{ij}$ be the support of the model at row $r_{ij}$.

Now consider any joint outcome $s$ for this row in which there are $k$ $+$-outcomes and $(n-k)$ $-$-outcomes, and the outcome for $X^i$ is different to the outcome for $X^j$. We claim that $s$ is not in $S_{ij}$. If we compute the inner product whose squared norm gives the probability for $s$, we see that there are two terms, of the form $+1/c$ and $-1/c$ respectively. Thus the probability of $s$ is $0$, and it is not in the support.
We can express this in logical terms by saying that $S_{ij}$ satisfies the formula
\begin{equation}
\label{XXZeq}
\bigwedge_{k \neq i,j, s(k) = {+}} z_k \; \wedge \; \bigwedge_{k \neq i,j, s(k) = {-}} \neg z_k \;\; \IMP \; \; (x_i \leftrightarrow x_j) . 
\end{equation}
We now consider the row where $Z$ measurements are selected by every party. The support of this row is described by the formula
\begin{equation}
\label{Zeq}
\bigvee_{\pi} \; [\bigwedge_{i=1}^k z_{\pi(i)} \; \wedge \; \bigwedge_{j=k+1}^n \neg z_{\pi(j)} ]. 
\end{equation}
This is the logical counterpart of the description of $S(n, k)$ in the $Z$-basis.

From each disjunct $D$ of~(\ref{Zeq}) together with the relevant instances of~(\ref{XXZeq}), we can prove that $x_i \leftrightarrow x_j$ for all $i$, $j$ such that $z_i$ and $z_j$ appear with opposite polarity in $D$.
Note that, by the conditions on $k$ and $n$, both polarities do appear in $D$. By the transitivity of logical equivalence, it follows that $x_i \leftrightarrow x_j$ can be derived for all $i$, $j$. Thus $D$, together with the formulas~(\ref{XXZeq}), implies the formula
\begin{equation}
\label{Xeq}
\bigwedge_{i,j} \; x_i \leftrightarrow x_j . 
\end{equation}
Thus~(\ref{Zeq}) together with the conjunction of all instances of~(\ref{XXZeq}) implies~(\ref{Xeq}).

It follows that any global section which satisfies these formulas must restrict to just two joint outcomes in the row where $X$ measurements are selected by every party, namely those with the same outcome at every part.

To complete the argument, it suffices to show that these two outcomes form a proper subset of the support at that row. If we calculate the probability for each of these events, we obtain
\[ \left(\frac{{n \choose k}}{(\sqrt{2})^n \sqrt{{n \choose k}}} \right)^2 \;\;\; = \;\;\; \frac{{n \choose k}}{2^n} . \]
Thus we must show that
\[ \frac{{n \choose k}}{2^n} \; < \; \frac{1}{2} , \]
or equivalently
\[ {n \choose k} \;\; < \;\; 2^{n-1} \;\; = \;\; \sum_{l=0}^{n-1} {n-1 \choose l} \]
which follows from Pascal's rule:
\[ {n \choose k} \; = \; {n-1 \choose k-1} + {n-1 \choose k} . \]
Note however that to obtain a strict inequality, we need the assumption that $n>2$; the argument for the EPR state $S(2,1)$ fails at exactly this point.
\end{proof}

Using the results of \cite{abramsky2012logical}, we automatically obtain a logical Bell inequality which is violated by $S(n, k)$; the violation is
\[ 1 \; - \; \frac{{n \choose k}}{2^{n-1}} . \]

We also note that logical non-locality is preserved by the action of local unitaries $U_1 \otimes \cdots \otimes U_n$. If a state $\ket{\psi}$ is logically non-local with respect to measurement bases 
\[ \eta_1^{+}, \eta_1^{-}, \ldots , \eta_n^{+}, \eta_n^{-} , \]
then $U_1 \otimes \cdots \otimes U_n \ket{\psi}$ is logically non-local with respect to the measurement bases
\[ U_1 \eta_1^{+}, U_1 \eta_1^{-}, \ldots , U_n \eta_n^{+}, U_n \eta_n^{-} . \]
This follows since inner products and hence probabilities are preserved:
\[ \begin{array}{lcl}
\langle U_1 \eta_1^{\plusminus}\otimes \cdots \otimes U_n \eta_n^{\plusminus} \mid (U_1 \otimes \cdots \otimes U_n) \ket{\psi} \rangle  & = & \langle (U_1  \otimes \cdots \otimes U_n) \eta_1^{\plusminus} \otimes \cdots \otimes \eta_n^{\plusminus} \mid (U_1 \otimes \cdots \otimes U_n) \ket{\psi} \rangle \\
& = & \langle (U_1 \otimes \cdots \otimes U_n)^{\dagger} (U_1  \otimes \cdots \otimes U_n) \eta_1^{\plusminus} \otimes \cdots \otimes \eta_n^{\plusminus}   \ket{\psi}  \\
& = & \langle  \eta_1^{\plusminus} \otimes \cdots \otimes  \eta_n^{\plusminus} \ket{\psi}  .
\end{array} \]
Thus the orbits of the Dicke states under the actions of local unitaries are all logically non-local.

\begin{theorem}
All permutation-symmetric $n$-partite entangled states, for $n>2$, are logically non-local.
\end{theorem}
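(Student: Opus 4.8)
The plan is to split the permutation-symmetric states into two classes and treat each with one of the tools already in hand. Since the Dicke states $S(n,0), \dots, S(n,n)$ span the symmetric subspace, every permutation-symmetric $n$-qubit state is a superposition of Dicke states. Following \cite{wang2012nonlocality}, I would call a symmetric state \emph{maximally permutation-symmetric} if it is local-unitary equivalent to a single Dicke state $S(n,k)$, and \emph{non-maximally permutation-symmetric} otherwise. Every symmetric state lies in exactly one of these classes, so this case split is exhaustive.

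\textbf{Non-maximal states.} Here I would appeal directly to \cite{wang2012nonlocality}: their analysis yields a Hardy-type argument for every non-maximally permutation-symmetric state, so such a state is logically contextual for a suitable choice of local observables, hence logically non-local. It is worth recording that a symmetric state is either genuinely $n$-partite entangled or fully separable, and that a fully separable symmetric state $\ket{v}^{\otimes n}$ is local-unitary equivalent to $S(n,n) = \ket{0^n}$, hence maximal; so the states arising in this case are exactly the genuinely entangled non-Dicke symmetric states, which is precisely the scope of \cite{wang2012nonlocality}.

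\textbf{Maximal states.} Write $\ket{\psi} = (U_1 \otimes \cdots \otimes U_n)\, S(n,k)$ for local unitaries $U_i$. Since $S(n,0) = \ket{1^n}$ and $S(n,n) = \ket{0^n}$ are product states and $\ket{\psi}$ is assumed entangled, we must have $0 < k < n$. Proposition~\ref{Dickeprop} then shows that $S(n,k)$ is logically non-local with respect to the $X$ and $Z$ observables, and the local-unitary invariance of logical non-locality established just above transfers this to $\ket{\psi}$. Combining the two cases proves the theorem.

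The hard work has essentially already been discharged: the delicate step is the direct argument of Proposition~\ref{Dickeprop}, in particular the \emph{strict} inequality ${n \choose k} < 2^{n-1}$ extracted from Pascal's rule, which is exactly where $n > 2$ is needed (it fails for the EPR state $S(2,1)$), together with the Hardy-type construction imported from \cite{wang2012nonlocality}. The only genuinely new point in assembling the theorem is the bookkeeping that the maximal/non-maximal dichotomy is exhaustive and correctly aligned with these two inputs — that \cite{wang2012nonlocality} really does cover every symmetric state not local-unitary equivalent to a Dicke state, and that entanglement excludes the degenerate cases $k \in \{0, n\}$ — and I would not expect any obstacle there beyond care with the statements being cited. (Throughout, "logically non-local" is read as "logically contextual for some choice of local observables"; the GHZ states, which are permutation-symmetric and in fact strongly contextual, thus also satisfy the conclusion of the theorem in this sense.)
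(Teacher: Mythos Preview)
Your proposal is correct and follows essentially the same approach as the paper: split the symmetric states into those in the local-unitary orbit of a Dicke state and those not, invoke \cite{wang2012nonlocality} for the latter, and combine Proposition~\ref{Dickeprop} with local-unitary invariance for the former. Your write-up is in fact more explicit than the paper's about the bookkeeping (why the dichotomy is exhaustive, why entanglement forces $0<k<n$, and how to read ``logically non-local'' so that the strongly contextual GHZ states are covered).
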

\begin{proof}
In \cite{wang2012nonlocality} it is shown that all permutation-symmetric states \emph{except} the unitary orbit of the Dicke states admit a Hardy argument, making use of the Majorana representation of permutation-symmetric states.
This is easily converted into a proof of logical non-locality.
The theorem now follows by combining this result with Proposition~\ref{Dickeprop}.
\end{proof}

\section{Functionally dependent states}

We now turn to a class of highly non-permutation-symmetric states.

For each $n\geq 2$, a $n$-ary Boolean function is a function $F:\{0,1\}^n\rightarrow \{0,1\}$. Each $n$-ary Boolean function can be expressed as a multivariate polynomial over $GL(2)$:
$$F(x_1,\ldots,x_n)=a_0 + \sum_{i}a_1^{i}x_i+ \sum_{i,j}a_2^{i,j}x_ix_j+\ldots + a_n^{1,2,\ldots,n}x_1x_2\ldots x_n$$
There are $2^n=1 + n + {n \choose 2} + \ldots + {n\choose n}$ summands in the expression of the above polynomial, each of which containing a binary coefficient $a_t^{i_1,\ldots,i_t}$. Hence there are $2^{2^{n}}$ distinct $n$-variate polynomials over $GF(2)$. Alternatively, each $n$-ary Boolean function can be expressed as a propositional formula in the Boolean variables $x_1,\ldots,x_n$ \cite{hazewinkel2001encyclopaedia}.

We define a balanced $n+1$-qubit quantum state with a functional dependency given by a $n$-variate polynomial $F$ as above to be a state which has the form
$$\Psi_F(n+1)=\frac{1}{\sqrt{2^n}}\sum_{q_1q_2\ldots q_n=00\ldots0}^{11\ldots 1} |q_1q_2\ldots q_n F(q1,q2,\ldots,q_n) \rangle $$
when expressed in the $Z$-basis. 

In the rest of this section we shall classify the balanced functionally dependent $n+1$-qubit quantum states in terms of their contextuality properties. We shall do this first for the three-partite case. A classification of the $n+1$-qubit states for $n>2$ can then be obtained using the results from the three-partite scenarios.

\subsection{The three-partite case}

\subsubsection{Polynomials of degree zero}

There are $2^{2^{2}}=16$ three-partite balanced states with a functional dependency. Two of these, namely 
$$\frac{1}{2}|000\rangle+|010\rangle+|100\rangle+|110\rangle=\left(\frac{|0\rangle+|1\rangle}{\sqrt{2}}\right)^{\otimes2}\otimes|0\rangle$$
 and 
$$\frac{1}{2}|001\rangle+|011\rangle+|101\rangle+|111\rangle=\left(\frac{|0\rangle+|1\rangle}{\sqrt{2}}\right)^{\otimes2}\otimes|1\rangle$$ 
are obviously product states, and hence non-contextual. They correspond to the constant polynomials $F_0(q_1,q_2)=0$ and $F_1(q_1,q_2)=1$ respectively.

\subsubsection{Degree one polynomials}
There are six states whose corresponding polynomials have degree one. Two of these are given by the functional dependencies which correspond to the two-variable propositional formulas $XOR$ and $NXOR$. Another four states are given by so-called dictatorships, i.e. the value of the last qubit is dictated either by the value of the first qubit or by the value of the second qubit. We shall look at these two classes of states below.

\subsubsection{XOR and NXOR}
The polynomials corresponding to the $XOR$ and $NXOR$ states have the form $F^a_{XOR}(q_1,q_2)=a+q_1+q_2$ with $a=0$ for $XOR$ and $a=1$ for $NXOR$. 

\begin{theorem}
 The $XOR$ state is strongly contextual if each party chooses between $Y$ and $Z$ measurements.
\end{theorem}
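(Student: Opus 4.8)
The plan is to run a Mermin-style parity argument \cite{mermin1990quantum}, exactly analogous to the one that witnesses strong contextuality of the GHZ state under $X$ and $Y$ observables. The key preliminary observation is that the $XOR$ state is nothing but $H^{\otimes 3}\ket{\mathrm{GHZ}}$, where $\ket{\mathrm{GHZ}} = \frac{1}{\sqrt 2}(\ket{000}+\ket{111})$: applying Hadamards term by term gives $\frac{1}{2}(\ket{000}+\ket{011}+\ket{101}+\ket{110})$, the uniform superposition over the even-parity strings, which is the $XOR$ state in the $Z$-basis. Since $HZH = X$ and $HYH = -Y$, the empirical model produced by the $XOR$ state with $Y$ and $Z$ observables is, after swapping the two outcome labels of each $Y$ measurement, isomorphic to the model produced by $\ket{\mathrm{GHZ}}$ with $X$ and $Y$ observables; as strong contextuality is a property of the support alone, invariant under local unitaries and outcome relabellings, the theorem would already follow from the standard strong contextuality of the GHZ model recalled in the Introduction. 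I would nevertheless spell out the direct argument, which is self-contained and in the style of Proposition~\ref{Dickeprop}.

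First I would record that the $XOR$ state is stabilized by $X\otimes X\otimes I$, $X\otimes I\otimes X$, $I\otimes X\otimes X$ and $Z\otimes Z\otimes Z$ --- a one-line check in the $Z$-basis for each, or transport of the GHZ stabilizers along $H^{\otimes 3}$. Next, among the eight rows of the $\{Y,Z\}^3$ model, I would single out the four contexts $ZZZ$, $YYZ$, $YZY$ and $ZYY$, and compute the eigenvalue of the corresponding product Pauli observable on the state using $Y = iXZ$. For instance, $Y\otimes Y\otimes Z = i^{2}\,(X\otimes X\otimes I)(Z\otimes Z\otimes Z) = -(X\otimes X\otimes I)(Z\otimes Z\otimes Z)$, so the state is a $(-1)$-eigenvector; the same holds for $YZY$ and $ZYY$, whereas $Z\otimes Z\otimes Z$ fixes the state. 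Since a $\lambda$-eigenvector of a product $O_1\otimes O_2\otimes O_3$ of dichotomic $\pm 1$-valued observables produces, with probability one, a joint outcome whose product is $\lambda$, the support at each of these four rows satisfies the corresponding parity constraint on the outcome variables $y_1,y_2,y_3,z_1,z_2,z_3 \in \{+1,-1\}$.

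The contradiction is then the familiar one. Any global section of the support would assign values in $\{+1,-1\}$ to all six variables subject to $z_1z_2z_3 = +1$, $y_1y_2z_3 = -1$, $y_1z_2y_3 = -1$ and $z_1y_2y_3 = -1$. Multiplying these four equations, each variable occurs to an even power on the left, so the left-hand side is $+1$, while the right-hand side is $(+1)(-1)(-1)(-1) = -1$. Hence no global section exists, i.e. the model is strongly contextual.

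I expect the only real pitfall to be the phase bookkeeping in the identity $Y = iXZ$: the argument hinges on an \emph{odd} number --- three --- of the four chosen product observables carrying eigenvalue $-1$, so a single mislaid sign would destroy the parity count and the contradiction with it. The other point worth stating carefully, though routine, is why being a $\pm 1$-eigenvector of $O_1\otimes O_2\otimes O_3$ forces $o_1o_2o_3 = \pm 1$ on every outcome of positive probability; this is what licenses reading the eigenvalue computations as constraints cutting down the support.
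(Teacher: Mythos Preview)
Your argument is correct, and it takes a genuinely different route from the paper's. The paper computes the support table explicitly and then runs a case split on $z_3$: from the $ZZZ$ row it reads off $z_1 = z_2$ (for $z_3 = +$), from $YYZ$ that $y_1 \neq y_2$, and from $YZY$, $ZYY$ that $y_1 = y_2$, reaching a contradiction in each branch by direct inspection of inner products. You instead observe that the $XOR$ state is $H^{\otimes 3}\ket{\mathrm{GHZ}}$, transport the GHZ stabilizers to obtain the four parity constraints $z_1z_2z_3 = +1$, $y_1y_2z_3 = y_1z_2y_3 = z_1y_2y_3 = -1$ via $Y = iXZ$, and multiply them for the Mermin contradiction. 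Both proofs use exactly the same four rows of the table, so at bottom they exploit the same constraint system; the difference is packaging. Your stabilizer approach is more conceptual and explains \emph{why} this model is a relabelled copy of the GHZ--$XY$ model, a connection the paper does not make; it also carries over immediately to the $NXOR$ case (the next theorem) by conjugating with $I\otimes I\otimes X$. The paper's bare-hands computation, on the other hand, serves as a template that it reuses verbatim for the $n{+}1$-partite generalisation later in the section, where one freezes $n{-}2$ of the $Z$ outcomes and reduces to the tripartite argument.
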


\begin{proof}
The support of the probability table for the $XOR$ state is 
\begin{center}
\begin{tabular}{l|cccccccc}
& $+++$ & $++-$ & $+-+$ & $+--$ & $-++$ & $-+-$ & $--+$ & $---$ \\ \hline
$YYY$ &  $1$ & $1$ & $1$ & $1$ & $1$ & $1$ & $1$ & $1$ \\
$YYZ$ &  $0$ & $1$ & $1$ & $0$ & $1$ & $0$ & $0$ & $1$ \\
$YZY$ &  $0$ & $1$ & $1$ & $0$ & $1$ & $0$ & $0$ & $1$ \\
$ZYY$ &  $0$ & $1$ & $1$ & $0$ & $1$ & $0$ & $0$ & $1$ \\
$YZZ$ &  $1$ & $1$ & $1$ & $1$ & $1$ & $1$ & $1$ & $1$ \\
$ZYZ$ &  $1$ & $1$ & $1$ & $1$ & $1$ & $1$ & $1$ & $1$  \\
$ZZY$ &  $1$ & $1$ & $1$ & $1$ & $1$ & $1$ & $1$ & $1$ \\
$ZZZ$ & $1$ & $0$ & $0$ & $1$ & $0$ & $1$ & $1$ & $0$ \\
\end{tabular}
\end{center}

One can simply inspect the table above and check that none of the sections in the support of the $ZZZ$ row can be extended to global sections (i.e. each possible global  assignment consistent with the support of the $ZZZ$ row will restrict to a section outside the support on at least one of the three rows $YYZ$, $YZY$ and $ZYY$). Thus there cannot be any global assignment of outcomes whose restriction to each set of compatible measurements is in the support of the model. 

It is worth at this point to give a more formal expression to this argument in order to gain a better understanding of what is actually going on. For this recall that the $+$ and $-$ eigenstates of the $Z$ observable are $|0\rangle$ and $|1\rangle$ respectively while for the $Y$ observable they are (modulo some normalization constant which does not play any role in our argument) $|y_+\rangle:=|0\rangle + i|1\rangle$ and $|y_-\rangle:=|0\rangle-i|1\rangle$ respectively

We start our argument by assuming that a global section does exist. Assume next that this global section makes the assignment $z_3=+$. The probability of obtaining the outcome $z_1z_2+$ with $z_i\in\{+,-\}$ is given by the squared norm of the inner product 
$$\langle e_{z_1} e_{z_2} 0|XOR\rangle=\langle e_{z_1} e_{z_2}0|\frac{|000\rangle+|011\rangle+|101\rangle+|110\rangle}{2}$$
where $e_+=0$ and $e_-=1$. If we regard each $e_{z_i}$ as an element of $GF(2)$ then the inner product above is non-zero only if $$F^0_{XOR}(e_{z_1},e_{z_2})=e_{z_1}+e_{z_2}=0$$

So the sections in the support of the $ZZZ$ for which $z_3= +$ must have $z_1=z_2$, as the table confirms. 

Next consider the $YYZ$ set of compatible measurements. The probability (modulo normalization constants) of obtaining the outcome $y_1y_2+$ with $y_i\in\{+,-\}$ for this set of measurements is given by the squared norm of the inner product
\begin{equation}\label{yyz}
\langle y_{y_1} y_{y_2} 0|XOR\rangle =\langle y_{y_1} y_{y_2} 0|\frac{|000\rangle+|011\rangle+|101\rangle+|110\rangle}{2}
\end{equation}
We have
\begin{align*}
 \langle y_+y_+|&=\langle 00|+i\langle 01|+i\langle 10|-\langle11|\\
\langle y_+y_-|&=\langle 00|-i\langle 01|+i\langle 10|+\langle11|\\
\langle y_-y_+|&=\langle 00|+i\langle 01|-i\langle 10|+\langle11|\\
\langle y_-y_-|&=\langle 00|-i\langle 01|-i\langle 10|-\langle11| 
 \end{align*}
and since $F^0_{XOR}(0,1)=F^0_{XOR}(1,0)\neq 0$ the imaginary part of the tensor products above will not bring any contribution towards the value of the inner product (\ref{yyz}). The only contribution will come from the real part of the tensor products above, and it is easy to see that the inner product (\ref{yyz}) will vanish when $y_1=y_2$. So we must have $y_1\neq y_2$ in any global assignment which sends $z_3$ to $+$ in order to stay within the support of the $YYZ$ row.

On the other hand, the probabilities of obtaining the outcomes $y_1zy_3$ and $zy_2y_3$, where $z=z_1=z_2$, for the $YZY$ and $ZYY$ sets of compatible measurements are given by the inner products
 \begin{align}
  \langle y_{y_1} e_z y_{y_3} &|XOR\rangle =\left(\langle 0e_z0|+iy_3\langle 0e_z1|+iy_1\langle 1e_z0|-(y_1y_3)\langle 1e_z1|\right)\ |XOR\rangle\\ 
\langle  e_zy_{y_2}y_{y_3} &|XOR\rangle =\left(\langle e_z00|+iy_3\langle e_z01|+iy_2\langle e_z10|-(y_2y_3)\langle e_z11|\right)\ |XOR\rangle \label{zyy}
 \end{align}
If $e_z=0$ the imaginary part of the two expressions in (\ref{zyy}) will be equal to zero for all values of $y_i$. If $e_z=1$ the real part of the two expressions in (\ref{zyy}) will vanish for all values of $y_i$. In the first case the expressions are non-zero only if $y_1=y_2=-y_3$ and in the second case they are non-zero only if $y_1=y_2=y_3$. But both these assignments violate the previous requirement that $y_1\neq y_2$.

So far we have established the fact that no global section can assign the outcome $+$ to $z_3$. If on the other hand the outcome $-$ is assigned to $z_3$, we can construct a similar argument which yields a contradiction. This time the sections in the support of $ZZZ$ for which $z_3=-$ must have $z_1=-z_2$. The sections in the support of $YYZ$ must have $y_1=y_2$, while those in the support of $YZY$ and $ZYY$ must either have $y_1=-y_3=-y_2$ for $e_{z_2}=0$, $e_{z_3}=1$ or $y_1=y_3=-y_2$ for $e_{z_2}=1$ and $e_{z_3}=0$. \qed

\end{proof}

\begin{theorem}
 The $NXOR$ state is also strongly contextual if each party chooses between $Y$ and $Z$ measurements.
\end{theorem}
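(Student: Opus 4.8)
The plan is to sidestep a fresh computation of the full probability table and instead relate the $NXOR$ model to the $XOR$ model of the previous theorem. Writing $F^1_{XOR}(q_1,q_2) = 1 + q_1 + q_2$ out in the $Z$-basis, the $NXOR$ state is
\[ \Psi_{F^1_{XOR}}(3) \; = \; \frac{|001\rangle + |010\rangle + |100\rangle + |111\rangle}{2} \; = \; (I \otimes I \otimes X)\,|XOR\rangle , \]
where $X$ denotes the Pauli bit-flip acting on the third qubit; indeed flipping $q_3$ turns the even-parity strings $000,011,101,110$ supporting $|XOR\rangle$ into the odd-parity strings above. So $NXOR$ differs from $XOR$ only by a local unitary at party~$3$.

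Next I would observe that this particular local unitary \emph{normalizes} the measurement scenario under consideration: $X$ maps the $Z$-eigenbasis $\{|0\rangle,|1\rangle\}$ to itself, interchanging the two vectors, and — by the one-line computation $X|y_+\rangle = i|y_-\rangle$, $X|y_-\rangle = -i|y_+\rangle$ — it also maps the $Y$-eigenbasis $\{|y_+\rangle,|y_-\rangle\}$ to itself, interchanging the two vectors up to a phase. Since phases on basis vectors are irrelevant to measurement probabilities, conjugating by $X$ at party~$3$ carries the $\{Y,Z\}$-at-each-site scenario to itself, its only effect being to swap the two outcome labels $+ \leftrightarrow -$ at party~$3$. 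Hence the empirical model for $NXOR$ (with $Y,Z$ at each site) is obtained from that for $XOR$ by applying, in every row, the outcome-relabeling bijection that swaps $+$ and $-$ in the third coordinate. A coordinatewise relabeling of outcomes is an isomorphism of the associated presheaves of local sections, and in particular sends global sections to global sections, so it preserves strong contextuality. By the previous theorem $XOR$ is strongly contextual, and therefore so is $NXOR$.

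There is essentially no obstacle here: the only points needing care are the (trivial) verification that the bit-flip preserves the $Y$-basis up to irrelevant phases, and the remark that strong contextuality is manifestly invariant under relabeling of outcomes. As a sanity check one can note that the $ZZZ$-row support of $NXOR$ is $\{{+}{+}{-},\,{+}{-}{+},\,{-}{+}{+},\,{-}{-}{-}\}$, the odd-parity sections, which is precisely the third-coordinate flip of the $XOR$ $ZZZ$-support, and similarly for the remaining rows. If a fully self-contained argument is preferred, one can instead repeat verbatim the inner-product analysis of the previous proof with $F^1_{XOR}$ in place of $F^0_{XOR}$: the identities $F^1_{XOR}(0,0) = F^1_{XOR}(1,1) = 1 \neq 0$ and $F^1_{XOR}(0,1) = F^1_{XOR}(1,0) = 0$ take over the roles of the corresponding values of $F^0_{XOR}$, and the same case split on the outcome assigned to $z_3$ produces the contradiction.
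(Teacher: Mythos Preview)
Your proof is correct. The paper takes a more pedestrian route: it writes out the full support table for $NXOR$ and reruns the case analysis of the $XOR$ proof explicitly (splitting on $z_3 = +$ versus $z_3 = -$ and deriving the contradiction from the $ZZZ$, $YYZ$, $YZY$, $ZYY$ rows), only remarking at the end that the $NXOR$ table is obtained from the $XOR$ table by a relabelling of the $\pm$ column labels. Your argument promotes that closing observation to the main proof, and moreover explains \emph{why} the relabelling holds by exhibiting the local unitary $I \otimes I \otimes X$ that carries $|XOR\rangle$ to $|NXOR\rangle$ and normalises both the $Y$- and $Z$-eigenbases at party~3. This is cleaner and more conceptual; the paper's explicit rerun has the minor advantage of being entirely self-contained, not appealing to the (admittedly evident) invariance of strong contextuality under outcome relabelling. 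A small cosmetic difference: the paper phrases the relabelling as swapping $+$ and $-$ in \emph{all} coordinates, whereas your local-unitary picture swaps them only in the third; both bijections send the $XOR$ support table to the $NXOR$ one, since the $XOR$ supports are invariant under simultaneously flipping any two coordinates.
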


\begin{proof}
The support of the probability table for the $NXOR$ state is 
\begin{center}
\begin{tabular}{l|cccccccc}
& $+++$ & $++-$ & $+-+$ & $+--$ & $-++$ & $-+-$ & $--+$ & $---$ \\ \hline
$YYY$ &  $1$ & $1$ & $1$ & $1$ & $1$ & $1$ & $1$ & $1$ \\
$YYZ$ &  $1$ & $0$ & $0$ & $1$ & $0$ & $1$ & $1$ & $0$ \\
$YZY$ &  $1$ & $0$ & $0$ & $1$ & $0$ & $1$ & $1$ & $0$ \\
$ZYY$ &  $1$ & $0$ & $0$ & $1$ & $0$ & $1$ & $1$ & $0$ \\
$YZZ$ &  $1$ & $1$ & $1$ & $1$ & $1$ & $1$ & $1$ & $1$ \\
$ZYZ$ &  $1$ & $1$ & $1$ & $1$ & $1$ & $1$ & $1$ & $1$  \\
$ZZY$ &  $1$ & $1$ & $1$ & $1$ & $1$ & $1$ & $1$ & $1$ \\
$ZZZ$ & $0$ & $1$ & $1$ & $0$ & $1$ & $0$ & $0$ & $1$ \\
\end{tabular}
\end{center}

The argument for strong contextuality follows the same pattern as for the $XOR$ state. We assume by contradiction that a global section exists, and that it makes the assignment $z_3=+$. Then from the $ZZZ$ row we obtain the requirement that $z_1\neq z_2$. From the $YYZ$ row we obtain that $y_1=y_2$ and from the $YZY$ and $ZYY$ rows we obtain that $y_1\neq y_2$, which is a contradiction. 

Similarly, if $z_3=-$ we must have $z_1=z_2$ and $y_1\neq y_2$ from the $ZZZ$ and $YYZ$ rows. This means we must also have $y_1=y_2$ from the $YZY$ and $ZYY$ rows, which again is a contradiction.

Note at this point that the similarity between these two arguments for strong contextuality is due to the similar structure of the tables for the $XOR$ an $NXOR$ states. Namely, the second table can be obtained from the first by interchanging the $+$ and $-$ signs which label the table columns. Thus the second argument is the same as the first, only with the $+$ and $-$ signs interchanged.
 \qed

\end{proof}

\subsubsection{Dictatorships}

The four degree one polynomials of the form $F^a_{1}(q_1,q_2)=a+q_1$ and $F^a_{1}(q_1,q_2)=a+q_2$ where $a\in\{0,1\}$ correspond to the so-called dictatorship states, where the value of the last qubit is dictated by the value of either the first or of the second qubit. In the $Z$ basis these states are
$$\Delta^+_2:=\frac{|0\rangle+|1\rangle}{\sqrt{2}}\otimes \frac{|00\rangle+|11\rangle}{\sqrt{2}}$$
or
$$\Delta^-_2:= \frac{|0\rangle+|1\rangle}{\sqrt{2}}\otimes \frac{|01\rangle+|10\rangle}{\sqrt{2}}$$
if the dictatorship is given by the second qubit. Similarly, if the dictatorship is given by the first qubit, we have two possible states
$$\Delta^+_1:= \frac{|0_2\rangle+|1_2\rangle}{\sqrt{2}}\otimes \frac{|0_10_3\rangle+|1_11_3\rangle}{\sqrt{2}}$$
and
$$\Delta^-_1:=\frac{|0_2\rangle+|1_2\rangle}{\sqrt{2}}\otimes \frac{|0_a1_3\rangle+|1_10_3\rangle}{\sqrt{2}}$$
where the subscripts $1$, $2$ and $3$ indicate whether the qubit belongs to the first, second or third party respectively.

\begin{proposition}\label{bell}
The four dictatorship states are weakly contextual for suitable dichotomic choices of measurements.
\end{proposition}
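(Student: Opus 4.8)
The plan is to exploit that each of the four dictatorship states is \emph{biseparable}. Reading off the formulas above, $\Delta^{\pm}_{1}$ and $\Delta^{\pm}_{2}$ all have the form $|{+}\rangle_{a} \otimes |\Phi\rangle_{bc}$: one party $a$ carries the single-qubit state $|{+}\rangle = \frac{|0\rangle + |1\rangle}{\sqrt{2}}$ by itself, while the other two parties $b, c$ share one of the two-qubit maximally entangled states $\frac{|00\rangle+|11\rangle}{\sqrt{2}}$ or $\frac{|01\rangle+|10\rangle}{\sqrt{2}}$ (the perfect correlation between the dictating and the dictated qubit). A local unitary on one of $b, c$ carries $|\Phi\rangle$ to the EPR state, and each of the three grades of non-locality is a local-unitary invariant --- this is shown for logical non-locality in the discussion following Proposition~\ref{Dickeprop}, and the same argument applies to strong non-locality and to ordinary contextuality. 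So, for the purposes of our hierarchy, each dictatorship state is the bipartite EPR model with an inert third party adjoined; given the anomalous status of the bipartite case recalled in the introduction, this is precisely the setting in which only weak contextuality should appear. The proof splits accordingly into showing the model is contextual for some observables, and that it is never strongly or logically contextual.

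\emph{Contextuality.} Put the standard CHSH observables on $b$ and $c$ and any dichotomic observable on $a$. Since the state factors across the $a \mid bc$ cut, so does the empirical model, and its marginal on $\{b, c\}$ is the CHSH model of the EPR state, which violates the CHSH inequality. A local hidden-variable model for the full tripartite model would restrict to one for this sub-model, so none exists, and the state is contextual for this choice. (Equivalently one exhibits an explicit logical Bell inequality in the sense of~\cite{abramsky2012logical}, which is what the informal name of the proposition alludes to.)

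\emph{Never strongly or logically contextual.} For \emph{any} choice of local observables the resulting $(3,2,2)$ model is a tensor product $M \cong M_{a} \otimes M_{bc}$, where $M_{a}$ is the one-party model obtained by measuring $|{+}\rangle$ and $M_{bc}$ is the bipartite EPR model; since probabilities multiply, the support of $M$ at a row $(r^{a}, r^{bc})$ is $S^{a}_{r^{a}} \times S^{bc}_{r^{bc}}$. A one-party model has no compatibility constraints, so every element of each $S^{a}_{r^{a}}$ is the restriction of a global section of $M_{a}$ (one compatible with all rows). The crux is the lemma: for a product $M = M_{a} \otimes M_{bc}$ of this shape, $M$ is strongly (resp.\ logically) contextual iff $M_{bc}$ is. ``Strong'' is immediate, as global sections of $M$ are pairs of global sections of the two factors. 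For ``logical'', fix a candidate row $(r^{a}_{*}, r^{bc}_{*})$ and determine the assignments on $M$'s measurements compatible with all rows \emph{other} than this one: because $a$ offers a second observable, the rows sharing $r^{bc}_{*}$ but not $r^{a}_{*}$ force the $bc$-part to be compatible at $r^{bc}_{*}$ as well, so this set is exactly the product of the all-rows-compatible global sections of $M_{a}$ with those of $M_{bc}$. Restricting to $(r^{a}_{*}, r^{bc}_{*})$ then gives $S^{a}_{r^{a}_{*}} \times \{\, t|_{r^{bc}_{*}} : t \text{ a global section of } M_{bc} \,\}$, which is all of $S^{a}_{r^{a}_{*}} \times S^{bc}_{r^{bc}_{*}}$ precisely when the global sections of $M_{bc}$ reach its whole support at $r^{bc}_{*}$ --- equivalently, precisely when $M_{bc}$ is not logically contextual at that row. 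Since the EPR model is not strongly contextual (no bipartite quantum model is; the strongly contextual bipartite models are the non-quantum PR boxes) and, being maximally entangled, admits no Hardy argument and hence is not logically contextual for any observables~\cite{hardy1993nonlocality}, the lemma gives that $M$ is neither strongly nor logically contextual, whatever the observables.

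Combining the two parts: each dictatorship state is contextual for a suitable choice of local observables but strongly and logically non-contextual for every choice --- i.e.\ weakly contextual. The main obstacle I expect is the ``logical'' half of the lemma: one has to fix the precise definition of logical contextuality (which assignments count as global sections, and over which rows compatibility is imposed) and verify that the product decomposition of the various section sets goes through, so that a logical obstruction in $M$ genuinely descends to one in $M_{bc}$ and conversely. Everything after that is bookkeeping.
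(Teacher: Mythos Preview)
Your strategy is sound and more conceptual than the paper's. Both exploit that a dictatorship state factors as $|{+}\rangle \otimes |\Phi\rangle$ and reduce to the bipartite Bell-state model, but where the paper fixes explicit measurement angles, writes out the tripartite probability table as (constant)$\,\times\,$(bipartite table), and observes that a local model for the former would restrict to one for the latter, you instead prove an abstract product lemma showing that each grade of contextuality of $M_{a} \otimes M_{bc}$ coincides with that of $M_{bc}$ when $M_{a}$ is single-party. Your route thereby also subsumes the paper's separate Theorem~\ref{bellth} (no logical contextuality for \emph{any} dichotomic observables), which the paper establishes by a case analysis over all possible zero patterns in the general $\Phi^{\pm}$ probability table.

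The one real gap is your appeal to Hardy for the $M_{bc}$ input. Hardy's 1993 result shows that his particular paradox construction fails for maximally entangled bipartite states; it does \emph{not} show that no possibilistic obstruction of any form exists, i.e.\ that the EPR model is never logically contextual in the sense defined here. ``Admits no Hardy argument'' and ``not logically contextual'' are not a priori the same condition. This is exactly what the paper's Theorem~\ref{bellth} supplies, by checking directly over all $(\theta_i,\phi_i)$ that no vanishing pattern in the $\Phi^{\pm}$ table produces a logically contextual support. So your product lemma is fine, but its hypothesis on the bipartite side still needs that computation --- \cite{hardy1993nonlocality} alone does not discharge it. (If you read the Proposition narrowly, as ``for \emph{some} observables the resulting model is weakly contextual'', you can sidestep the issue entirely: with the CHSH angles every table entry is nonzero, the support is full, and non-logical-contextuality is immediate.)
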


\begin{proof}

Consider the general form of an observable, given in terms of angles $\theta$ and $\phi$ on the Bloch sphere
$$U(\theta,\phi):=\left(\begin{array}{cc}
                          \cos\theta& e^{-i\phi}\sin\theta\\
			  e^{i\phi}\sin\theta&-\cos\theta
                         \end{array}\right)$$

We will use the fact that the bell basis states $\Phi^+=\frac{|00\rangle+|11\rangle}{\sqrt{2}}$ and $\Phi^-=\frac{|00\rangle+|11\rangle}{\sqrt{2}}$ are weakly contextual with respect to suitable choices of measurements.

It can be machine checked that the state $\Phi^+$ is weakly contextual if we allow each party to choose between the measurements $A:=U \left(\frac{\pi}{2},\frac{\pi}{8}\right)$ and $B:=U\left(\frac{\pi}{2},\frac{5\pi}{8}\right)$, while the state  $\Phi^-$ is weakly contextual if we allow each party to choose between the measurements $C:=U\left(\frac{\pi}{8},\frac{\pi}{2}\right)$ and $D:=U\left(\frac{5\pi}{8},\frac{\pi}{2}\right)$

In fact, it can also be machine checked that this choice of measurements gives a maximal violation of Bell inequalities for both states.

The probability models of the dictatorship states can be obtained from the probability models of the states $\Phi^+$ and $\Phi^-$ in a straightforward way. Let $|+_A\rangle$ and $|-_A\rangle$ stand for the eigenstates of $A$ and $|+_B\rangle$ and $|-_B\rangle$ stand for the eigenstates of $B$. 

Define the two constants
\begin{align*}
a_+:&=\frac{1}{\sqrt{2}}(\langle+_A|0\rangle+\langle+_A|1\rangle)\\
a_-:&=\frac{1}{\sqrt{2}}(\langle-_A|0\rangle+\langle-_A|1\rangle) 
\end{align*}
Note that $a_++a_-=1$. 
Similarly, define the two constants
\begin{align*}
b_+:&=\frac{1}{\sqrt{2}}(\langle+_B|0\rangle+\langle+_B|1\rangle)\\
b_-:&=\frac{1}{\sqrt{2}}(\langle-_B|0\rangle+\langle-_B|1\rangle) 
\end{align*}
Up to two decimal points precision, the probability table of the $\Phi^+$ state for the observables $A$ and $B$ is 
\begin{center}
\begin{tabular}{l|cccc}
& $++$ & $+-$ & $-+$ & $--$ \\ \hline
$AA$ &  $0.43$ & $0.07$ & $0.07$ & $0.43$  \\
$AB$ &  $0.07$ & $0.43$ & $0.43$ & $0.07$  \\
$BA$ &  $0.07$ & $0.43$ & $0.43$ & $0.07$  \\
$BB$ &  $0.07$ & $0.43$ & $0.43$ & $0.07$  \\
\end{tabular}
\end{center}

The inner product formula for computing probabilities implies that the probability table of the dictatorship state $\Delta^+_2$ can be expressed in terms of the constants $a_+$, $a_-$, $b_+$ and $b_-$ and the probability table of $\Phi^+$:

\begin{center}
\begin{tabular}{l|cccc|cccc}
& $+++$ & $++-$ & $+-+$ & $+--$ & $-++$ & $-+-$ & $--+$ & $---$ \\ \hline
$AAA$ &  $0.43a_+$ & $0.07a_+$ & $0.07a_+$ & $0.43a_+$ & $0.43a_-$ & $0.07a_-$ & $0.07a_-$ & $0.43a_-$  \\
$AAB$ &  $0.07a_+$ & $0.43a_+$ & $0.43a_+$ & $0.07a_+$ & $0.07a_-$ & $0.43a_-$ & $0.43a_-$ & $0.07a_-$\\
$ABA$ &  $0.07a_+$ & $0.43a_+$ & $0.43a_+$ & $0.07a_+$ & $0.07a_-$ & $0.43a_-$ & $0.43a_-$ & $0.07a_-$ \\
$ABB$ &  $0.07a_+$ & $0.43a_+$ & $0.43a_+$ & $0.07a_+$ & $0.07a_-$ & $0.43a_-$ & $0.43a_-$ & $0.07a_-$ \\
\hline
$BAA$ &  $0.43b_+$ & $0.07b_+$ & $0.07b_+$ & $0.43b_+$& $0.43b_-$ & $0.07b_-$ & $0.07b_-$ & $0.43b_-$  \\
$BAB$ &  $0.07b_+$ & $0.43b_+$ & $0.43b_+$ & $0.07b_+$ & $0.07b_-$ & $0.43b_-$ & $0.43b_-$ & $0.07b_-$\\
$BBA$ &  $0.07b_+$ & $0.43b_+$ & $0.43b_+$ & $0.07b_+$ & $0.07b_-$ & $0.43b_-$ & $0.43b_-$ & $0.07b_-$\\
$BBB$ &  $0.07b_+$ & $0.43b_+$ & $0.43b_+$ & $0.07b_+$ & $0.07b_-$ & $0.43b_-$ & $0.43b_-$ & $0.07b_-$\\
\end{tabular}
\end{center}

Note also that the table of the dictatorship state $\Delta^+_1$ will have the same values as the one above, but the rows will be indexed in the order $AAA$, $AAB$, $BAA$, $BAB$, $ABA$, $ABB$, $BBA$, $BBB$, since the coefficients $a_{+/-}$ and $b_{+/-}$ come from the second qubit's contribution to the inner product.

It is now straightforward to deduce that the states $\Delta^+_1$ and $\Delta^+_2$ are indeed weakly contextual for the same choice of measurements for which the $\Phi^+$  state is weakly contextual, since any probability distribution on the set of global sections of one of these two dictatorship states would restrict to a probability distribution on the set of global sections of the $\Phi^+$ state.

Next note that up to two decimal points precision, the probability table of the $\Phi^-$ state for the observables $C$ and $D$ is
\begin{center}
\begin{tabular}{l|cccc}
& $++$ & $+-$ & $-+$ & $--$ \\ \hline
$AA$ &  $0.43$ & $0.07$ & $0.07$ & $0.43$  \\
$AB$ &  $0.07$ & $0.43$ & $0.43$ & $0.07$  \\
$BA$ &  $0.07$ & $0.43$ & $0.43$ & $0.07$  \\
$BB$ &  $0.07$ & $0.43$ & $0.43$ & $0.07$  \\
\end{tabular}
\end{center}

and the probability tables of the $\Delta^-_1$ and $\Delta^-_2$ dictatorship states can be expressed in terms of the table above and four suitably defined constants $c_{+/-}$ and $d_{+/-}$, so by analogy with the $\Delta^+_1$ and $\Delta^+_2$ case, these states will also be weakly contextual.\qed

\end{proof}

\begin{theorem}\label{bellth}
None of the four dictatorship states is logically contextual, for \textbf{any} dichotomic choice of measurements.
\end{theorem}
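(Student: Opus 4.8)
The plan is to observe that each dictatorship state is, up to a relabelling of the parties, a product of a single-qubit state with a maximally entangled (Bell) pair: $\Delta_2^{\pm} = \ket{+} \tensor \ket{\Phi}_{23}$ and $\Delta_1^{\pm} = \ket{+} \tensor \ket{\Phi}_{13}$, where $\ket{+} = \tfrac{1}{\sqrt 2}(\ket 0 + \ket 1)$ and $\ket{\Phi}$ is a Bell state. Consequently, for any choice of dichotomic projective measurements at the three sites, the probability of a joint outcome factors as the product of the single-qubit marginal and the Bell-pair marginal, so the resulting empirical model is the \emph{product} $e_A \tensor e_B$ of the one-party model $e_A$ coming from $\ket{+}$ and the two-party model $e_B$ coming from $\ket{\Phi}$, and the support at a context $(C_A, C_B)$ is $S_{C_A} \times S_{C_B}$.

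First I would prove a general product lemma: \emph{if an empirical model has the form $e_A \tensor e_B$ and neither $e_A$ nor $e_B$ is logically contextual, then neither is $e_A \tensor e_B$.} This is immediate from the characterisation of the absence of logical contextuality as the property that every local section (element of some support $S_C$) extends to a global section lying in every support: a global section of $e_A \tensor e_B$ is a pair of global sections, so given a target $(o_A, o_B) \in S_{C_A} \times S_{C_B}$ one extends $o_A$ through $e_A$ and $o_B$ through $e_B$ and pairs the results. The single-qubit factor $e_A$ is never logically contextual --- indeed never contextual --- since its contexts are singletons and picking any outcome in the (nonempty) support of each measurement gives a global section. So the theorem reduces to showing that the two-party model of a maximally entangled qubit pair is \emph{never} logically contextual, for any dichotomic choice of measurements (cf.\ \cite{hardy1993nonlocality}).

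For the Bell factor, by the local-unitary invariance of logical non-locality noted above it suffices to treat $\ket{\Phi^+} = \tfrac{1}{\sqrt 2}(\ket{00} + \ket{11})$. Fix bases $\{\ket{a_0}, \ket{a_1}\}$ and $\{\ket{b_0}, \ket{b_1}\}$ for the two measurements at each party; a short computation gives $(\bra{a_i} \tensor \bra{b_j})\ket{\Phi^+} = \tfrac{1}{\sqrt 2}\,\overline{\ip{\overline{b_j}}{a_i}}$, where $\ket{\overline{b_j}}$ is the complex conjugate of $\ket{b_j}$ in the $Z$-basis, so the outcome $(i,j)$ has probability zero exactly when $\ket{a_i} \perp \ket{\overline{b_j}}$. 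Since $\{\ket{a_0}, \ket{a_1}\}$ and $\{\ket{\overline{b_0}}, \ket{\overline{b_1}}\}$ are orthonormal bases of $\Comp^2$, a single zero among the four entries forces the diagonally opposite entry to vanish and the other two to be nonzero; hence each context-support is either the full set $\{+,-\}^2$ or the graph of a bijection $\{+,-\} \to \{+,-\}$. Moreover a support is non-full only when $\{\ket{b_0},\ket{b_1}\}$ is the $Z$-conjugate of $\{\ket{a_0},\ket{a_1}\}$; since a given basis has only one $Z$-conjugate, at most one of the two measurements at $B$ can give a non-full support against a fixed measurement at $A$, and symmetrically. Thus the non-full contexts form a \emph{partial matching} in the $2 \times 2$ grid of contexts: the empty matching, a single context, or the two contexts on one diagonal. (This is precisely the restriction that fails for super-quantum devices such as PR boxes, whose supports are all bijection-graphs but are not matched in this way; it is the genuinely quantum input.) A short case check over these three configurations shows that every local section extends to a global section compatible with all four supports, so the model is not logically contextual.

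Combining the product lemma with the single-qubit case and this Bell-state analysis gives, for every dichotomic choice of measurements at the three sites, an empirical model that is not logically contextual; hence none of the four dictatorship states is logically contextual. The main obstacle is the Bell-state step: establishing the ``full or bijection-graph'' dichotomy for the supports and, above all, the partial-matching constraint, and then verifying extendability in each configuration (degenerate choices, such as the same basis selected twice at one site, only make the model smaller and are handled in the same way).
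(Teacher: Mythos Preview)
Your argument is correct and takes a genuinely different route from the paper. Both proofs reduce to the bipartite Bell-state model --- the paper via the explicit relationship between probability tables established in Proposition~\ref{bell}, you via a clean product lemma --- but the analyses of the Bell factor diverge sharply. The paper parametrises the two observables as $U(\theta_1,\phi_1)$ and $U(\theta_2,\phi_2)$, writes out the $4\times 4$ probability table in the six resulting trigonometric quantities, enumerates the algebraic conditions under which individual entries can vanish, and then appeals to a computer check that no subset of these conditions yields a logically contextual support. Your approach replaces this with a structural characterisation: each row's support is either all of $\{+,-\}^2$ or the graph of a bijection, and the non-full rows form a partial matching in the $2\times 2$ grid of contexts (the point where the quantum constraint enters and PR-box behaviour is excluded). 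The three resulting configurations are then dispatched by a short hand-check. Your argument is more conceptual, requires no machine verification, and makes transparent \emph{why} the maximally entangled bipartite state cannot support a Hardy-type obstruction; it also directly accommodates distinct measurement pairs at the two sites, whereas the paper's parametrisation uses the same pair $A,B$ at both parties. The paper's approach, by contrast, is more mechanical and would generalise more readily to an automated search over other states or higher-dimensional systems.
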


\begin{proof}
The relationship between probability tables discussed in Proposition \ref{bell} allows us to reduce the problem to the bi-partite scenario. Thus we seek to prove that neither of the two Bell basis states is logically contextual for any given choice of measurements. 

Let $A:=U(\theta_1,\phi_1)$ and $B:=U(\theta_2,\phi_2)$. Let $c$, $s$ and $f$ stand for $\cos\frac{\theta_1}{2}$, $\sin\frac{\theta_1}{2}$ and $e^{i\phi_1}$ respectively. Similarly, let $k$, $z$ and $v$ stand for $\cos\frac{\theta_2}{2}$, $\sin\frac{\theta_2}{2}$ and $e^{i\phi_2}$ respectively. Then the general form of the probability model of the $\Phi^+$ state is 

\begin{center}
\begin{tabular}{l|cccc}
& $++$ & $+-$ & $-+$ & $--$ \\ \hline
$AA$ &  $|c^2+f^2\cdot s^2|^2$ & $|cs-f^2\cdot cs|^2$ & $|cs-f^2\cdot cs|^2$ & $|s^2+f^2\cdot c^2|^2$  \\
$AB$ &  $|ck+fv\cdot sz|^2$ & $|cz-fv\cdot sk|^2$ & $|sk-fv\cdot cz|^2$ & $|sz+fv\cdot ck|^2$  \\
$BA$ &  $|ck+fv\cdot sz|^2$ & $|sk-fv\cdot cz|^2$ & $|cz-fv\cdot sk|^2$ & $|sz+fv\cdot ck|^2$  \\
$BB$ &  $|k^2+v^2\cdot z^2|^2$ & $|kz-v^2\cdot kz|^2$ & $|kz-v^2\cdot kz|^2$ & $|z^2+v^2\cdot k^2|^2$  \\
\end{tabular}
\end{center}

In most cases, all of the sections in the model of $\Phi^+$ will be in the support, in which case the state is clearly not logically contextual. However, for certain values of $c$, $f$, $v$ and $k$ (which may be chosen independently of each other) the entries of the table above may vanish, which will exclude certain sections from the support. It suffices therefore to check that the resulting possibilistic models are not logically contextual for any choices of $c$, $f$, $v$ and $k$ (and implicitly also of $s$ and $z$) which would allow one or more of the above table entries to vanish. We therefore need to consider each element in the powerset of the following set of conditions on $c$, $s$, $f$, $v$, $z$ and $k$:

$$\mathcal{C}:=\left\{c\vee k\in\{0,\pm 1\},~ f\vee v\in\{\pm 1,~ \pm i\},~ f=\pm \frac{1}{v},~ c=\pm s,~ k=\pm z,~ ck=\pm sz,~ cz=\pm sk\right\} $$

A computer can easily verify that no subset of the above set of conditions leads to a logically contextual probability model. 

Finally, using the relation between probability tables from Proposition \ref{bell}, we note that any global section of the model above can be easily extended to a global section of the corresponding dictatorship state model by adding the  assignment $+$ to the third party's outcome for the $A$ measurement, if $a_+\neq0$ and $-$ otherwise, and similarly for the third party's outcome corresponding to the $B$ measurement. We can therefore conclude that for all possible choices of measurements, the dictatorship states corresponding to $\Phi^+$ can not be logically contextual. 

For the $\Phi^-$ state note that the observables $C:=U(\phi_1,\theta_1)$ and $D:=U(\phi_2,\theta_2)$ will give the probability model
\begin{center}
\begin{tabular}{l|cccc}
& $++$ & $+-$ & $-+$ & $--$ \\ \hline
$AA$ &  $|cs-f^2\cdot cs|^2$ &  $|c^2+f^2\cdot s^2|^2$ & $|s^2+f^2\cdot c^2|^2$ &  $|cs-f^2\cdot cs|^2$   \\
$AB$ &  $|cz-fv\cdot sk|^2$ & $|ck+fv\cdot sz|^2$ & $|sz+fv\cdot ck|^2$  &  $|sk-fv\cdot cz|^2$  \\
$BA$ &  $|sk-fv\cdot cz|^2$ &  $|ck+fv\cdot sz|^2$  & $|sz+fv\cdot ck|^2$ & $|cz-fv\cdot sk|^2$ \\
$BB$ &  $|kz-v^2\cdot kz|^2$ &  $|k^2+v^2\cdot z^2|^2$ & $|z^2+v^2\cdot k^2|^2$ & $|kz-v^2\cdot kz|^2$   \\
\end{tabular}
\end{center}
where $c,k,s,z$ now take $\phi_i/2$ as arguments while $f$ and $v$ take $\theta_i$ as arguments. 

We can show that this model is also not logically contextual, using an argument completely analogous to the one used for the $\Phi^+$ state. Hence the dictatorship states corresponding to the $\Phi^-$ state are also not logically contextual. \qed

\end{proof}

\subsubsection{Degree two polynomials}\label{two}

There are eight balanced functionally dependent states whose corresponding polynomials have degree two. Four of these correspond to the two-variable propositional formulas $AND,$ $NAND,$ $OR$ and $NOR$. Their respective polynomials have the form $$F^a_{AND}(q_1,q_2)=a+q_1q_2$$ and $$F^a_{OR}=a+q_1+q_2+q_1q_2$$ with $a=0$ for $AND$ and $OR$ and $a=1$ for $NAND$ and $NOR$. 

The other four states correspond to logical implication and its negation. We use $L_1,\ L_{2},\ NL_{1}$ and $NL_{2}$ to denote the propositional formulas $q_1\Rightarrow q_2,\ q_2\Rightarrow q_1$ and $\overline{q_1\Rightarrow q_2}$, $\overline{q_2\Rightarrow q_1}$ respectively. The polynomials corresponding to these propositional formulas are of the form $$F^a_{NL_{i}}=a+q_i+q_1q_2$$ with $i\in\{1,2\}$, $a=0$ for $NL_i$ and $a=1$ for $L_i$.

All the eight states described above turn out to be logically contextual if we choose $Y$ and $Z$ measurements in each part.

\begin{theorem}\label{and}
 The $AND$ state is logically contextual. 
\end{theorem}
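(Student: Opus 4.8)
The plan is to run the same kind of argument as for the Dicke states in Proposition~\ref{Dickeprop} and for $XOR$/$NXOR$: fix the observables $Y$ and $Z$ at each of the three sites (as for the other degree-two states), write down the induced $(3,2,2)$ empirical model for $\ket{AND}=\frac12(\ket{000}+\ket{010}+\ket{100}+\ket{111})$, and exhibit a compatible family $U_j$ together with an outcome $s$ in its support $S_j$ that extends to no global section of the support presheaf; producing such a pair is exactly what the definition of logical contextuality asks for (take $S=S_j\setminus\{s\}$).

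First I would compute the eight rows of the support table by the inner-product formula, as in the $XOR$ proof, and translate each row's support into a propositional formula in $z_1,z_2,z_3,y_1,y_2,y_3$ with the polarity conventions of equations~(\ref{XXZeq})--(\ref{Xeq}), so that a global section is precisely a valuation satisfying the conjunction of all eight formulas. The functional dependency $q_3=q_1\wedge q_2$ supplies the skeleton: the $ZZZ$ row has support exactly $\{{+}{+}{+},{+}{-}{+},{-}{+}{+},{-}{-}{-}\}$, the possibilistic reading of $z_3\leftrightarrow(z_1\wedge z_2)$; and, since $q_1=0$ or $q_2=0$ forces $q_3=0$, the rows $ZYZ$ and $YZZ$ have restricted supports encoding $z_3\Rightarrow z_1$ and $z_3\Rightarrow z_2$. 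The genuinely quantum content sits in the two rows in which the output qubit and one input qubit are measured in $Y$ while the other input is measured in $Z$: there the interference between the three ``$q_3=0$'' basis terms and the term $\ket{111}$ forces the two $Y$-outcomes to be anticorrelated whenever the $Z$-outcome is $-$, which I would record as the corresponding implications.

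The heart of the proof is then to single out the witnessing pair $(U_j,s)$ and to verify that the conjunction of the formulas for the other seven rows is propositionally inconsistent with $s$ being the restriction at $U_j$. As in Proposition~\ref{Dickeprop}, I expect the contradiction to emerge by propagation: assuming the global section restricts to $s$, the $Z$-type equalities pin down the output qubit, the mixed $Y$-$Z$ implications push this to anticorrelations among the $Y$-outcomes, and these clash with one further row, giving a contradiction of the shape $p\wedge\neg p$. This establishes logical non-locality; one also checks that a global section does exist (in contrast to $XOR$), so that $\ket{AND}$ is logically but not strongly non-local, fixing its place in the classification, and via~\cite{abramsky2012logical} one reads off the logical Bell inequality it violates.

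I expect the bookkeeping in this middle step to be the real obstacle: locating the precise $(U_j,s)$ and discharging the unsatisfiability check, which a priori ranges over the powerset of the constraints contributed by the remaining rows. Following the paper's own practice for the dictatorship states, I would handle the exhaustive part by a machine check over the finitely many candidate global assignments, while presenting the explicit chain of implications for the chosen witness so that the mechanism producing the contradiction stays transparent.
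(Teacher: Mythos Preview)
Your approach is essentially the paper's: $Y$ and $Z$ observables at each site, the $ZZZ$ constraint, and the anticorrelations in the $YZY$/$ZYY$ rows when the $Z$-outcome is $-$. The paper's execution is more direct than you anticipate, however---no machine check is needed: the witnessing row is $YYZ$ with section $y_1y_2z_3={+}{-}{-}$; from $z_3={-}$ the $ZZZ$ row forces $z_1=z_2={-}$, whereupon $YZY$ (with $z_2={-}$) demands $y_3={-}$ while $ZYY$ (with $z_1={-}$) demands $y_3={+}$, and the global section $z_1z_2z_3y_1y_2y_3={+}{+}{+}{+}{+}{+}$ is exhibited to rule out strong contextuality.
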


\begin{proof}
The support of the probability table for the $AND$ state is 
\begin{center}
\begin{tabular}{l|cccccccc}
& $+++$ & $++-$ & $+-+$ & $+--$ & $-++$ & $-+-$ & $--+$ & $---$ \\ \hline
$YYY$ &  $1$ & $1$ & $1$ & $1$ & $1$ & $1$ & $1$ & $1$ \\
$YYZ$ &  $1$ & $1$ & $1$ & $1$ & $1$ & $1$ & $1$ & $1$ \\
$YZY$ &  $1$ & $1$ & $0$ & $1$ & $1$ & $1$ & $1$ & $0$ \\
$ZYY$ &  $1$ & $1$ & $1$ & $1$ & $0$ & $1$ & $1$ & $0$ \\
$YZZ$ &  $1$ & $0$ & $1$ & $1$ & $1$ & $0$ & $1$ & $1$ \\
$ZYZ$ &  $1$ & $0$ & $1$ & $0$ & $1$ & $1$ & $1$ & $1$  \\
$ZZY$ &  $1$ & $1$ & $1$ & $1$ & $1$ & $1$ & $1$ & $1$ \\
$ZZZ$ & $1$ & $0$ & $1$ & $0$ & $1$ & $0$ & $0$ & $1$ \\
\end{tabular}
\end{center}

The global assignment $z_1z_2z_3y_1y_2y_3=++++++$ is clearly consistent with the support of the $AND$ table, so this state is not strongly contextual for $Y$ and $Z$ measurements. However, not all sections in the support can be extended to global sections. Consider for example the section $y_1y_2z_3=+--$ which is in the support. The only section on the $ZZZ$ row consistent with it is $z_1z_2z_3=---$. But it is now impossible to assign an outcome to $y_3$ which will make the resulting global section restrict to sections in the support of both of the rows $YZY$ and $ZYY$. In fact, there are only two sections in the support of the $YYZ$ row which cannot be extended to global ones. These are the sections where the two $Y$ measurements are assigned different outcomes, while the $Z$ measurement is assigned the outcome $-$. \qed
\end{proof}

\begin{theorem}\label{nand}
 The $NAND$ state is logically contextual.
\end{theorem}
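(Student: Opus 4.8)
The plan is to obtain the $NAND$ state as the image of the $AND$ state under a single-qubit local unitary, and then to invoke the fact recorded above that logical non-locality of an empirical model is preserved under local unitaries. The first step is to note that, since the $NAND$ polynomial is $F^1_{AND}(q_1,q_2) = 1 + q_1 q_2$, flipping the value of the third qubit turns the $Z$-basis expansion of $\Psi_{AND}$ into that of $\Psi_{NAND}$; that is, $\Psi_{NAND} = (I \otimes I \otimes X)\,\Psi_{AND}$, where $X$ is the Pauli bit-flip.

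The second step is to check that $X$ carries the whole measurement scenario, and not merely the state, to an isomorphic one. This holds because $XZX = -Z$ and $XYX = -Y$, so conjugation by $X$ maps the eigenbasis of $Z$ to itself and the eigenbasis of $Y$ to itself, in each case just interchanging the two eigenvectors (the phases being irrelevant to probabilities). Applying the remark following Proposition~\ref{Dickeprop} with $U_1 = U_2 = I$, $U_3 = X$, and the $Y$/$Z$ bases, we conclude that the empirical model produced by $\Psi_{NAND}$ with $Y$ and $Z$ observables in each part is isomorphic to the one produced by $\Psi_{AND}$, the isomorphism being the relabelling that swaps the $+$ and $-$ outcomes of the third party while fixing everything else. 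Isomorphic empirical models have the same contextuality type, so Theorem~\ref{and} immediately yields that $\Psi_{NAND}$ is logically contextual for $Y$ and $Z$ measurements. Explicitly, the support table of $\Psi_{NAND}$ is that of $\Psi_{AND}$ with the column pairs $(+++,++-)$, $(+-+,+--)$, $(-++,-+-)$ and $(--+,---)$ transposed, and a non-extendable section witnessing logical contextuality is the image $y_1 y_2 z_3 = +-+$ of the section $y_1 y_2 z_3 = +--$ used in the proof of Theorem~\ref{and}.

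I do not anticipate a real obstacle: the only step needing any care is verifying that $X$ stabilises both the $Z$- and the $Y$-measurement contexts (so that the whole scenario, not just the state, is transported to an isomorphic one), which is exactly the anticommutation computation above. If one prefers a self-contained proof in the style of Theorem~\ref{and}, one can instead write out the $NAND$ support table explicitly --- it differs from the $AND$ table only by the stated column transposition --- and repeat the extendability analysis verbatim with the third party's outcomes relabelled; this is routine but longer, so I would present the local-unitary argument as the proof and relegate the explicit table to a remark.
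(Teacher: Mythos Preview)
Your argument is correct and essentially coincides with the paper's: both reduce the $NAND$ case to Theorem~\ref{and} via the column relabelling $+++\mapsto ++-$ on the third party's outcome. You go a bit further by explaining \emph{why} this relabelling is legitimate---namely, because $\Psi_{NAND}=(I\otimes I\otimes X)\Psi_{AND}$ and $X$ conjugates both $Y$ and $Z$ to their negatives, hence merely swaps the $\pm$ eigenvectors---whereas the paper simply asserts the relabelling at the level of the support table; but the substance of the proof is the same.
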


\begin{proof}
The support of the probability table for the $NAND$ state is 
\begin{center}
\begin{tabular}{l|cccccccc}
& $+++$ & $++-$ & $+-+$ & $+--$ & $-++$ & $-+-$ & $--+$ & $---$ \\ \hline
$YYY$ &  $1$ & $1$ & $1$ & $1$ & $1$ & $1$ & $1$ & $1$ \\
$YYZ$ &  $1$ & $1$ & $1$ & $1$ & $1$ & $1$ & $1$ & $1$ \\
$YZY$ &  $1$ & $1$ & $1$ & $0$ & $1$ & $1$ & $0$ & $1$ \\
$ZYY$ &  $1$ & $1$ & $1$ & $1$ & $1$ & $0$ & $0$ & $1$ \\
$YZZ$ &  $0$ & $1$ & $1$ & $1$ & $0$ & $1$ & $1$ & $1$ \\
$ZYZ$ &  $0$ & $1$ & $0$ & $1$ & $1$ & $1$ & $1$ & $1$  \\
$ZZY$ &  $1$ & $1$ & $1$ & $1$ & $1$ & $1$ & $1$ & $1$ \\
$ZZZ$ & $0$ & $1$ & $0$ & $1$ & $0$ & $1$ & $1$ & $0$ \\
\end{tabular}
\end{center}
Note that this table can be obtained from the $AND$ table by simply relabeling the columns. The relabeling sends the first $+$ to $+$, the second $+$ to $+$ and the third $+$ to $-$, and it sends the first two $-$s to $-$ and the third one to $+$. 

The same argument used in the proof of Theorem \ref{and} can therefore be used to prove the logical contextuality of the $NAND$ state, with the provision that the new labeling replaces the one used within the old argument's statements.\qed
\end{proof}

\begin{remark}
The notation $+++\mapsto ++-$ unambiguously describes the relabeling used in the proof of Theorem \ref{nand}, and we shall use this shorthand notation in further proofs.
\end{remark}

\begin{theorem}
The $OR$, $NOR$, $L_1$, $NL_1$, $L_2$ and $NL_2$ states are all logically contextual.
\end{theorem}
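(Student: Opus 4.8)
The plan is to reduce all six cases to Theorems~\ref{and} and~\ref{nand}. I would exhibit each of the states for $OR$, $NOR$, $L_1$, $NL_1$, $L_2$, $NL_2$ as the image of the $AND$ state (equivalently the $NAND$ state) under the Pauli bit-flip $X$ applied to a suitable subset of the three qubits, and then observe that such bit-flips leave the logical-contextuality status of the $\{Y,Z\}$-empirical model unchanged, so the result falls out of the two theorems already proved.

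First I would record the relevant $GF(2)$ identities. Substituting $q_i\mapsto 1+q_i$ into $F^0_{AND}=q_1q_2$, and optionally adding $1$ to the output, yields $F^0_{NL_1}=q_1(1+q_2)=q_1+q_1q_2$, $F^0_{NL_2}=(1+q_1)q_2=q_2+q_1q_2$, $(1+q_1)(1+q_2)=1+q_1+q_2+q_1q_2=F^1_{OR}$ (the $NOR$ polynomial), $1+(1+q_1)(1+q_2)=q_1+q_2+q_1q_2=F^0_{OR}$, and $F^1_{NL_i}=1+F^0_{NL_i}$ (the $L_i$ polynomials). For $n=2$ the state $\Psi_F(3)=\frac{1}{2}\sum_{q_1q_2}\ket{q_1q_2\,F(q_1,q_2)}$ satisfies $X_i\,\Psi_F(3)=\Psi_{F'}(3)$, where $F'$ is $F$ with $q_i$ replaced by $1+q_i$, for $i\in\{1,2\}$, and $X_3\,\Psi_F(3)=\Psi_{1+F}(3)$. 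Since moreover $\Psi_{NAND}=X_3\,\Psi_{AND}$ --- which is exactly the relabelling $+++\mapsto++-$ used in the proof of Theorem~\ref{nand} --- these identities give $\Psi_{NL_1}=X_2\,\Psi_{AND}$, $\Psi_{NL_2}=X_1\,\Psi_{AND}$, $\Psi_{NOR}=X_1X_2\,\Psi_{AND}$, $\Psi_{L_1}=X_2\,\Psi_{NAND}$, $\Psi_{L_2}=X_1\,\Psi_{NAND}$, and $\Psi_{OR}=X_1X_2\,\Psi_{NAND}$.

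The key step is then that a bit-flip on qubit $i$ affects the empirical model only by relabelling the two outcomes at site $i$. Indeed $X$ anticommutes with both $Y$ and $Z$, so conjugation by $X$ negates each; equivalently $X$ interchanges the two eigenvectors of $Y$, and also those of $Z$, up to an irrelevant phase. Hence the $\{Y,Z\}$-probability table of $X_i\ket{\psi}$, for the same choices at every site, is exactly that of $\ket{\psi}$ with the $+/-$ labels swapped in the $i$-th component of every joint outcome; this is the special case $U_i=X$ of the fact, already used after Proposition~\ref{Dickeprop}, that a local unitary transforms the model only by correspondingly transforming the local bases. Applying this to each factor of the products above, the support of each of the six states is a per-site outcome relabelling of the $AND$ or the $NAND$ support table displayed in the proofs of Theorems~\ref{and}--\ref{nand}. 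Since whether an empirical model is logically contextual depends only on the incidence structure between the supports $S_j$ and the set of global sections, and that structure is preserved by any per-site relabelling of outcomes, all six states are logically contextual for $Y$ and $Z$ measurements; explicit Hardy-type witnesses are obtained by relabelling those exhibited in the proof of Theorem~\ref{and}.

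There is no serious obstacle here. The only points requiring care are getting the $GF(2)$ substitutions --- and hence the precise list of bit-flips --- right, and the elementary verification that $X$-conjugation interchanges the $Y$-eigenbasis and, separately, the $Z$-eigenbasis, which is what legitimises the reduction to a pure relabelling. One could instead write out all six support tables and rerun the argument of Theorem~\ref{and} verbatim under the appropriate relabelling, but the bit-flip reduction makes the uniformity of the six cases transparent and removes any residual case analysis.
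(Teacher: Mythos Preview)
Your proposal is correct and is essentially the same argument as the paper's: both reduce the six cases to the $AND$ (or $NAND$) case via per-site $+/-$ outcome relabellings, and your bit-flip list $X_2,\,X_1,\,X_1X_2$ applied to $\Psi_{AND}$ and $\Psi_{NAND}$ reproduces exactly the six relabellings the paper records. The only difference is presentational: the paper simply lists the relabellings $+++\mapsto\cdots$, whereas you derive them from the Pauli identity $XYX=-Y$, $XZX=-Z$ and the $GF(2)$ substitutions, which is a cleaner explanation of why the relabellings exist and why they preserve logical contextuality.
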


\begin{proof}
The support of the probability tables for these states are also obtained from the $AND$ table by column relabelings, so the argument used in the proof of Theorem \ref{and} can again be used to prove the logical contextuality of these states. The necessary relabelings are

\begin{itemize}
                                                                                                                                                                                                                                                                                 \item[1)] $+++\mapsto ---$ for the $OR$ state
\item[2)] $+++\mapsto --+$ for $NOR$
\item[3)] $+++\mapsto +--$ for $L_1$
\item[4)] $+++\mapsto +-+$ for $NL_1$
\item[5)] $+++\mapsto -+-$ for $L_2$ 
\item[6)] $+++\mapsto -++$ for $NL_2$
                                                                                                                                                                                                                                                                                \end{itemize}
\qed
\end{proof}

\begin{remark}
The relabelings above can also be used for the probability tables themselves, not only for their supports, but only for $Y$, $Z$ measurements. For general choices of measurements there is no simple relation between the probability tables of the balanced states with functional dependency given by degree two polynomials, nor between their supports.
\end{remark}

\subsection{The $n+1$-partite case for $n>2$}

We can use the results of the previous section to classify the $n+1$-partite balanced states which have a functional dependency. In the rest of this section, let $F_n$ denote a polynomial in $n$ variables. 

\subsubsection{Strongly contextual states}

\begin{theorem}
Given a $n+1$-partite balanced quantum state whose functional dependency is given by the polynomial $F_n(q_1,\ldots,q_n)$, the state is strongly contextual if the polynomial $F_n$ is of the form
$$F_n(q_1,\ldots,q_n)=q_i+q_j+F_{n-2}(q_1,\ldots,\hat{q_i},\ldots,\hat{q_j},\ldots,q_n)$$
for some variables $q_i$ and $q_j$ and some polynomial $F_{n_2}$. 
\end{theorem}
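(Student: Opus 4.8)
The plan is to reduce the $(n{+}1)$-partite case to the three-partite $\mathsf{XOR}$ result already established (the first theorem on $\mathsf{XOR}$), by exhibiting a suitable restriction of measurement contexts. Write $F_n(q_1,\ldots,q_n) = q_i + q_j + F_{n-2}(q_1,\ldots,\hat{q_i},\ldots,\hat{q_j},\ldots,q_n)$, and fix $Y$ and $Z$ observables at every party. The state $\Psi_{F_n}(n{+}1)$ in the $Z$-basis is $\frac{1}{\sqrt{2^n}}\sum_{\vec{q}} |q_1\cdots q_n\, F_n(\vec q)\rangle$. I would first compute, exactly as in the $\mathsf{XOR}$ proof, the support on the relevant rows: the row with $Z$ everywhere, the three rows where $Y$ is placed at $\{i, j\}$-and-output or appropriate triples among parties $i$, $j$ and the output qubit $n{+}1$, with $Z$ elsewhere. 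The key algebraic fact is that when party $k \neq i,j,n{+}1$ measures $Z$ and obtains outcome $e_k \in GF(2)$, the function $F_n$ restricted to those fixed values becomes $q_i + q_j + c$ where $c = F_{n-2}(\ldots e_k \ldots) \in GF(2)$ is a constant — i.e. an $\mathsf{XOR}$ or $\mathsf{NXOR}$ dependency between $q_i$, $q_j$ and the output.

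The main step is then to argue that conditioning the whole empirical model on any fixed $Z$-outcomes at the $n-2$ "passive" parties yields precisely the three-partite $\mathsf{XOR}$ or $\mathsf{NXOR}$ empirical model on parties $i$, $j$, $n{+}1$ (up to the column relabeling $+++\mapsto$ something, exactly the device used in the $\mathsf{NAND}$ and $\mathsf{OR}$ proofs). Since the passive parties always have $Z$ available and every $Z$-outcome string occurs with nonzero probability (the state is balanced), this conditioning is legitimate: any global section of $\Psi_{F_n}(n{+}1)$ would, upon restriction to parties $i,j,n{+}1$ and deletion of the passive coordinates, furnish a global section of the induced $\mathsf{XOR}$/$\mathsf{NXOR}$ model. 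By the earlier theorems that $\mathsf{XOR}$ and $\mathsf{NXOR}$ are strongly contextual with $Y,Z$ measurements, no such global section exists; hence $\Psi_{F_n}(n{+}1)$ is strongly contextual.

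I should be careful about one point: strong contextuality of a sub-model does not in general imply strong contextuality of the ambient model unless the restriction map on global sections is surjective onto (or at least lands inside) the sub-model's global sections — but here it is, because "restrict a global assignment for all of $Y,Z$ at all $n{+}1$ parties to the sub-scenario" is exactly the forgetful map, and compatible measurement families in the sub-scenario lift to compatible families in the full scenario (adjoin $Z$ at the passive parties). So the argument is: \emph{if} the full model had a global section $g$, then for the particular $Z$-values $\vec e$ at the passive parties picked out by $g$, the coordinates of $g$ at $i,j,n{+}1$ would give a global section of the $\mathsf{XOR}_{\vec e}$-model, contradiction. The one genuine obstacle to watch is verifying that the conditioned model is \emph{exactly} $\mathsf{XOR}$/$\mathsf{NXOR}$ and not some other possibilistic table — this requires checking that the $Y$-measurement rows also condition correctly, i.e. that the imaginary-part cancellations in the $\mathsf{XOR}$ proof go through verbatim once the passive $Z$-factors (which contribute only an overall sign $\pm 1$ or $\pm i$ depending on $e_k$) are pulled out. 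This is routine given the structure of the $\mathsf{XOR}$ argument, but it is where the real content lies, so I would write it out explicitly rather than merely assert it.
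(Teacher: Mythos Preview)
Your proposal is correct and follows essentially the same approach as the paper: fix $Y$ and $Z$ at every party, freeze the $Z$-outcomes at the $n{-}2$ parties other than $i$, $j$, and the output, observe that the residual dependency is $q_i+q_j+c$ for a constant $c\in GF(2)$, and invoke the three-partite $\mathsf{XOR}$/$\mathsf{NXOR}$ strong-contextuality argument to rule out any global section. The paper is slightly more concrete---it names the four rows $Z\cdots Z$, $Z\cdots Y_i\cdots Y_j\cdots Z$, $Z\cdots Z_i\cdots Y_j\cdots Y$, $Z\cdots Y_i\cdots Z_j\cdots Y$ explicitly and runs the $y_i=y_j$ vs.\ $y_i\neq y_j$ contradiction directly---whereas you phrase the same reduction more abstractly as a restriction-of-global-sections argument, but the content is identical.
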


\begin{proof}
If $Y$ and $Z$ measurements are chosen by each party, then we can show that none of the sections in the support of the $ZZZ\ldots Z$ row can be extended to a global section. 

Consider any fixed assignment of outcomes to the $Z$ measurements performed by the first $n$ parties, except the $i^{th}$ and the $j^{th}$ party. Let $\sigma_k\in\{+,-\}$, $k\neq i,j$ denote the outcome corresponding to the measurement performed by the $k^{th}$ party. Next evaluate the polynomial $F_{n-2}$ at the values of $q_1,\ldots,\hat{q_i},\ldots,\hat{q_j},\ldots,q_n$ corresponding to the fixed assignment of outcomes, using the convention that $0$ corresponds to the $+$ outcome and $1$ corresponds to the $-$ outcome. Use $a$ to denote the result of the evaluation.

Depending on the value of $a$ we can use the argument made for the strong contextuality of either the $XOR$ or the $NXOR$ state in order to show that there is no consistent assignment of outcomes which will restrict to sections in the support for all four of the following rows:

\begin{align*}
 Z\ldots ZZ_iZ\ldots ZZ_jZ\ldots ZZ\\
 Z\ldots ZY_iZ\ldots ZY_jZ\ldots ZZ\\
 Z\ldots ZZ_iZ\ldots ZY_jZ\ldots ZY\\
 Z\ldots ZY_iZ\ldots ZZ_jZ\ldots ZY
\end{align*}

As in the three-partite case, the contradiction comes from the fact that, depending on the value of $a$ and the outcome assigned to the $Z$ measurement of the $n^{th}$ party, the outcomes of the $Y_i$ and $Y_j$ measurements must be assigned equal values on the one hand, in order to be in the support of the $Z\ldots ZY_iZ\ldots ZY_jZ\ldots ZZ$ row, but on the other hand, opposite values in order to be in the support of the last two rows considered above, or viceversa. 

Since this can be done for all possible assignments of outcomes to the $Z$ measurements performed by the first $n$ parties, the quantum state we are considering must be strongly contextual.\qed
\end{proof}

\subsubsection{Logically contextual states}

\begin{theorem}
Any $n+1$-partite balanced quantum state whose functional dependency is given by a polynomial $F_n(q_1,\ldots,q_n)$ of degree at least two which is not of the form
$$F_n(q_1,\ldots,q_n)=q_i+q_j+F_{n-2}(q_1,\ldots,\hat{q_i},\ldots,\hat{q_j},\ldots,q_n)$$
for any choice of variables $q_i$ and $q_j$ and polynomial $F_{n_2}$ is logically contextual.
\end{theorem}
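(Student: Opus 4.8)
The plan is to reduce the $(n+1)$-partite case to the three-partite classification already established, by restricting attention to well-chosen triples of parties. Suppose $F_n$ has degree at least two and is \emph{not} of the forbidden form $q_i + q_j + F_{n-2}$. As before, fix $Y$ and $Z$ observables at each site. Because $F_n$ is not affine (degree $\geq 2$), there is some monomial of degree $\geq 2$; pick two indices $i, j$ such that $q_i q_j$ occurs (possibly inside a larger monomial, but we can first argue that some genuine degree-two monomial $q_i q_j$ survives — this is where the structure of $GF(2)$ polynomials is used). The strategy is: freeze the $Z$-measurement outcomes of all parties other than $i$, $j$, and the dependent party $n+1$, to constants; this substitution turns $F_n$ into a two-variable Boolean function $G(q_i, q_j)$ obtained by specializing. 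The resulting sub-model on the parties $\{i, j, n+1\}$ (with the others measured in $Z$ and their outcomes fixed) is, up to relabelling of outcomes, exactly the three-partite balanced functionally dependent state for $G$.

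The key step is then a case analysis on which two-variable Boolean function $G$ arises. Since $F_n$ is not of the form $q_i + q_j + F_{n-2}$ for \emph{any} choice of $i,j$, we must be able to choose the frozen values so that $G$ is \emph{not} the $XOR$ or $NXOR$ function (equivalently, not affine of the shape $a + q_i + q_j$) — otherwise every such specialization would be affine in its two surviving variables, which, together with the hypothesis that a degree-$\geq 2$ monomial exists, forces $F_n$ into the forbidden shape. Hence for a suitable choice of frozen outcomes, $G$ is one of the degree-two functions $AND$, $NAND$, $OR$, $NOR$, $L_1$, $L_2$, $NL_1$, $NL_2$, or a dictatorship/constant. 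In the degree-two cases, Theorems~\ref{and}, \ref{nand} and the following theorem give a logical-contextuality witness in the three-partite sub-model: a section in the support of one context whose only consistent completions to a global section fail on some other context, all within the frozen slice. This witness lifts to the full $(n+1)$-partite model: a global section of the full model restricts to a global section of the sub-model, so the obstruction persists; and the model is not strongly contextual because the global all-$+$ (or appropriately relabelled) assignment remains consistent, exactly as in the three-partite arguments. One must separately check that the specialization cannot always land in a dictatorship or constant function; but a constant or dictatorship specialization in the two chosen variables, for all freezings, again forces $F_n$ to have no surviving $q_i q_j$-type monomial, contradicting degree $\geq 2$ after a combinatorial argument over $GF(2)$.

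The main obstacle is the combinatorial lemma controlling the specializations: one needs that a $GF(2)$ polynomial of degree $\geq 2$ which is \emph{not} equal to $q_i + q_j + F_{n-2}$ for \emph{any} pair $i,j$ admits \emph{some} pair $i,j$ and some assignment of the other variables to constants under which the induced two-variable function is a genuine degree-two Boolean function (i.e.\ one of $AND, NAND, OR, NOR, L_1, L_2, NL_1, NL_2$). Equivalently, one must rule out the degenerate possibility that for every pair and every freezing the induced function is affine. I would prove this by contraposition: if for a fixed pair $(i,j)$ every freezing of the remaining variables yields a function affine in $q_i, q_j$, then the coefficient of $q_i q_j$ in $F_n$ (as a function of the remaining variables) is identically zero, and moreover the only nonlinear dependence of $F_n$ on $q_i$ or $q_j$ is excluded; doing this for all pairs where $q_iq_j$-monomials could occur, and using that $F_n$ has degree $\geq 2$, pins $F_n$ down to the form $q_i + q_j + F_{n-2}$ for the offending pair, contradicting the hypothesis. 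Once this lemma is in hand, the lifting argument and the appeal to the three-partite theorems are routine.
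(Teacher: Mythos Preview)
Your overall strategy coincides with the paper's: freeze the $Z$-outcomes of all parties except $i$, $j$, and $n{+}1$, reduce to a two-variable Boolean function, and invoke the three-partite degree-two analysis. However, you have significantly overcomplicated the combinatorial core, and in one place you assert something the paper explicitly flags as unresolved.

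The paper dispatches your ``main obstacle'' in one line. Pick any monomial of degree at least two in $F_n$ and let $q_i,q_j$ be two of its variables. Write
\[
F_n = F^1_{n-2} + q_i F^2_{n-2} + q_j F^3_{n-2} + q_i q_j F^4_{n-2},
\]
with the $F^k_{n-2}$ depending only on the remaining variables. By the choice of $i,j$, the polynomial $F^4_{n-2}$ is not identically zero, so some assignment of the remaining variables makes $F^4_{n-2}=1$; under that freezing the induced function in $q_i,q_j$ has $q_iq_j$-coefficient $1$ and is therefore one of the eight degree-two Boolean functions. There is no need for a contrapositive argument over all pairs, and no need to invoke the hypothesis that $F_n$ is not of the form $q_i+q_j+F_{n-2}$: the degree-$\geq 2$ assumption alone already guarantees a degree-two specialization. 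The ``not of the form'' clause in the statement is there only for the classification (those states were already shown strongly contextual in the previous theorem), not because the logical-contextuality argument requires it. Your attempt to use that hypothesis to exclude $XOR$/$NXOR$ specializations is a misreading of its role.

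Your claim that ``the model is not strongly contextual because the global all-$+$ \ldots\ assignment remains consistent, exactly as in the three-partite arguments'' is not justified, and the paper says so explicitly: immediately after the proof it remarks that exhibiting a global section in the $(n{+}1)$-partite case is \emph{not} as simple as in the tripartite case, so strong contextuality cannot be immediately ruled out for $Y,Z$ measurements. The theorem as stated only asserts logical contextuality (which strong contextuality implies), so this does not invalidate the result, but you should not present the exclusion of strong contextuality as routine.
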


\begin{proof}
Consider any two variables $q_i$ and $q_j$ which appear in at least one of the terms with degree at least two of the polynomial $F_n$. The polynomial $F_n$ can be rewritten as 
$$F_n(q_1,\ldots,q_n)=F^1_{n-2}+q_iF^2_{n-2}+q_jF^3_{n-2}+q_iq_jF^4_{n-2}$$
where $F^i_{n-2}$ are $n-2$ variable polynomials in $q1,\ldots,\hat{q_i},\ldots,\hat{q_j},\ldots,q_n$.

Next choose any assignment of outcomes to the $Z$ measurements performed by the first $n$ parties, except the $i^{th}$ and the $j^{th}$ party, such that the polynomial $F^4$ evaluates to $1$ at the values of $q_1,\ldots,\hat{q_i},\ldots,\hat{q_j},\ldots,q_n$ corresponding to this assignment. Using this assignment, we have obtained a degree two polynomial in two variables, $q_i$ and $q_j$.

We can now use one of the arguments in Section \ref{two} in order to identify at least two sections in the support of the $$Z\ldots ZY_iZ\ldots ZY_jZ\ldots ZZ$$ row which cannot be extended to a global section consistent with the support of the rows

\begin{align*}
 Z\ldots ZZ_iZ&\ldots ZY_jZ\ldots ZY\\
 Z\ldots ZY_iZ&\ldots ZZ_jZ\ldots ZY\\
&\text{and}\\
 Z\ldots ZZ_iZ&\ldots ZZ_jZ\ldots ZZ \ \ \qed
\end{align*}  
\end{proof}

Note however that showing that at least one global section \textit{does} exist for the class of states considered in the Theorem above is not as simple as in the three partite case, so strong contextuality cannot be immediately ruled out for these states even in the special case when one considers only $Y$ and $Z$ measurements.

\subsubsection{Weakly contextual states}

\begin{theorem}
Any $n+1$-partite balanced quantum state whose functional dependency is given by a polynomial $F_n(q_1,\ldots,q_n)$ of degree one which is not of the form
$$F_n(q_1,\ldots,q_n)=q_i+q_j+F_{n-2}(q_1,\ldots,\hat{q_i},\ldots,\hat{q_j},\ldots,q_n)$$
for any choice of variables $q_i$ and $q_j$ and polynomial $F_{n_2}$ is weakly contextual.
\end{theorem}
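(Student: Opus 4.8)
The plan is to reduce the $(n+1)$-partite case to the three-partite analysis, exactly as in the previous two theorems, but now landing in the class of \emph{dictatorship} states rather than the $XOR$/$NXOR$ or degree-two cases. A degree-one polynomial $F_n$ that is not of the excluded form $q_i + q_j + F_{n-2}$ must, after accounting for the forbidden pattern, be either constant (giving a product state, hence non-contextual — but this is excluded since we assume it is genuinely degree one) or of the form $F_n(q_1,\ldots,q_n) = a + q_i$ for a single variable $q_i$ and a constant $a \in \{0,1\}$. Indeed, a degree-one polynomial over $GF(2)$ is $a + \sum_{k \in T} q_k$ for some nonempty set $T$; if $|T| \geq 2$ we could pick two of its elements as $q_i,q_j$ and write $F_n = q_i + q_j + (a + \sum_{k \in T \setminus \{i,j\}} q_k)$, which \emph{is} of the excluded form. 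Hence $|T| = 1$, so $F_n = a + q_i$.

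First I would fix $Y$ and $Z$ measurements at every site, and observe that the resulting empirical model factorizes just as in Proposition~\ref{bell}: the $(n+1)$-qubit state $\Psi_{F_n}$ with $F_n = a + q_i$ is, up to reordering the tensor factors, a copy of the three-qubit dictatorship state $\Delta^{\pm}$ (with the last qubit functionally determined by the $i$-th) tensored with $(n-2)$ single-qubit states $\frac{|0\rangle + |1\rangle}{\sqrt 2}$, one for each of the remaining $q_k$, $k \neq i$. Since each of those extra qubits contributes an independent probability factor depending only on the local measurement and outcome (and never vanishing for $Y$ or $Z$), the support of the full model is, modulo this relabelling of rows, a direct product of the support of the three-partite dictatorship model with full (trivial) supports on the extra parties.

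Next I would invoke the two results already proved about dictatorship states. By Theorem~\ref{bellth} (specialized to the $Y,Z$ choice, which is a legitimate dichotomic choice), the three-partite dictatorship model is not logically contextual, and in fact every section of it extends to a global section; tensoring with full single-qubit supports preserves this, so the $(n+1)$-partite model is not logically contextual and hence not strongly contextual either. On the other hand, Proposition~\ref{bell} shows the three-partite dictatorship states \emph{are} weakly contextual for a suitable choice of dichotomic measurements; applying that same non-trivial choice at the three relevant sites (and, say, $Z$ everywhere else) and pushing a probabilistic no-signalling model through the product decomposition shows the $(n+1)$-partite state cannot be non-contextual: a probability distribution on its global sections would restrict to one on the global sections of $\Phi^{\pm}$, contradicting weak contextuality of the latter. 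Combining, the state is contextual but neither logically nor strongly contextual, i.e.\ weakly contextual.

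The main obstacle I anticipate is making the ``tensoring with trivial factors preserves the contextuality grade'' step fully rigorous in both directions. For the negative direction (not logically contextual) one must check that extending each global section of the three-partite factor by the forced local outcomes on the extra qubits — precisely the trick used at the end of the proof of Theorem~\ref{bellth}, adding $+$ or $-$ according to whether the relevant amplitude is nonzero — really does land in the support of every compatible family of the big model, including mixed families where some extra parties measure $Y$ and some measure $Z$; this is routine because those extra factors have full support, but it needs to be said. For the positive direction (contextual) one must argue that a no-signalling probabilistic model for the $(n+1)$-partite table would marginalize to a no-signalling model for the $\Phi^{\pm}$ table realizing the same probabilities, so that weak contextuality is inherited; this is where one should be careful that the measurement choice witnessing weak contextuality of $\Phi^{\pm}$ can indeed be embedded as a dichotomic choice in the larger scenario, but since we are free to choose any dichotomic observables at each site, there is no real difficulty.
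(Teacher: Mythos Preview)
Your reduction of $F_n$ to a single-variable polynomial $a + q_i$ and the resulting tensor decomposition of the state are correct and match the paper's argument exactly. The positive direction (witnessing contextuality by marginalizing a putative hidden-variable model down to the $\Phi^{\pm}$ factor) is also essentially the paper's argument, though the paper simply uses the special measurements $U(\pi/2,\pi/8)$, $U(\pi/2,5\pi/8)$ (resp.\ $U(\pi/8,\pi/2)$, $U(5\pi/8,\pi/2)$) at \emph{all} $n+1$ sites and runs the inductive version of Proposition~\ref{bell}, rather than mixing them with $Z$ at the extra sites; either scheme works.

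There is, however, a genuine gap in your negative direction. To conclude that the \emph{state} is weakly contextual you must show that the resulting model is not logically (and hence not strongly) contextual for \emph{every} dichotomic choice of local observables, not merely for the $Y,Z$ choice. Your argument explicitly specializes Theorem~\ref{bellth} to $Y,Z$ and never returns to the general case; as written, you have only shown that one particular model is not logically contextual, which does not rule out the state being logically non-local via some other observables. The repair is straightforward and is exactly what the paper does: the tensor factorization $\Psi_{F_n} = \Phi^{\pm}_{i,n+1} \otimes \bigotimes_{k \neq i} \ket{+}_k$ holds at the level of states, so for \emph{any} choice of local observables the probability table of $\Psi_{F_n}$ is the table of $\Phi^{\pm}$ (with whatever observables parties $i$ and $n+1$ chose) scaled row-by-row by $n-1$ single-qubit constants, precisely as in Proposition~\ref{bell}. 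One then invokes the full strength of Theorem~\ref{bellth}, which already quantifies over all dichotomic choices for the bipartite factor, and extends each bipartite global section to an $(n+1)$-partite one by the $+/-$ trick you describe in your obstacles paragraph. Your ``full support for $Y,Z$'' observation is irrelevant here; what matters is that at least one of the two single-qubit constants for each extra party is nonzero, which is automatic.
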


\begin{proof}
Any degree one polynomial which is not of the above form must contain precisely one term. Thus the state we are dealing with is a so-called dictatorship state, i.e. the value of the last qubit is dictated by the value of its $i^{th}$ qubit, and the state is either of the form
$$\Delta^+_i:=\left(\frac{|0\rangle+|1\rangle}{\sqrt{2}}\right)^{\otimes n}\otimes \frac{|0_i0_{n+1}\rangle+|1_i1_{n+1}\rangle}{\sqrt{2}}$$
or
$$\Delta^-_i:=\left(\frac{|0\rangle+|1\rangle}{\sqrt{2}}\right)^{\otimes n}\otimes \frac{|0_i1_{n+1}\rangle+|1_i0_{n+1}\rangle}{\sqrt{2}}$$
and its probability table can be expressed in terms of a suitable choice of $n-2$ constants and the probability table of either the $\Phi^+$ or of the $\Phi^-$ state.

A straightforward inductive argument based on the argument used in Proposition \ref{bell} will show that the $n+1$-partite dictatorship states are also weakly contextual for the measurements $U \left(\frac{\pi}{2},\frac{\pi}{8}\right)$, $U\left(\frac{\pi}{2},\frac{5\pi}{8}\right)$ and $U\left(\frac{\pi}{8},\frac{\pi}{2}\right)$, $U\left(\frac{5\pi}{8},\frac{\pi}{2}\right)$ respectively. 

Moreover, the generalization of the argument used in Theorem \ref{bellth} shows that the $n+1$-partite dictatorship states are not logically contextual for any possible dichotomic choice of measurements.\qed 
\end{proof}

\subsubsection{Non-contextual states}

Any $n+1$-partite balanced quantum state whose functional dependency is given by a constant polynomial is clearly a product state and hence non-contextual.

\section{Final Remarks}

In this paper, we have shown the logical contextuality of two classes of states, the permutation-symmetric and functionally dependent states.
Our proofs have been constructive, in that we have explicitly given local observables which witness the logical contextuality of these classes of states.

What about the general case?
In the forthcoming paper \cite{ACY} with Shenggang Ying, we establish the following result.

Let $P(n)$ be the class of $n$-qubit pure states which, up to permutation, can be written as tensor products of 1-qubit and 2-qubit maximally entangled states. Let $L(n)$ be the set of logically contextual $n$-qubit states.

\begin{theorem}
For all $n \geq 1$, $P(n)$ and $L(n)$ partition the set of $n$-qubit pure states.
\end{theorem}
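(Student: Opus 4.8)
The plan is to prove the two halves of the partition separately: that $P(n)$ and $L(n)$ are disjoint (no state in $P(n)$ is possibilistically contextual for \emph{any} choice of local observables) and that they are jointly exhaustive (every $n$-qubit pure state outside $P(n)$ is possibilistically contextual for \emph{some} choice), reading ``logically contextual'' in the inclusive sense in which strongly contextual states also count, and using the standard reformulation that a model is logically non-contextual exactly when every section in its support extends to a global assignment whose restriction to each context lies in the support. The workhorse is the behaviour of empirical models under tensor products of states: if a pure state splits across a partition of the parties as $|\psi\rangle=|\phi_1\rangle\otimes\cdots\otimes|\phi_r\rangle$, then for \emph{any} local observables the model factorises, by multiplicativity of the Born rule on product states, so the support at each context is the product of the factor supports. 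From this a routine diagram chase gives two facts: (1) a tensor product of logically non-contextual models is logically non-contextual (extend each restricted section inside its factor and paste), and (2) if one tensor factor is logically contextual and the others each admit a global section, the product is logically contextual.

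Disjointness then goes as follows. A state in $P(n)$ is, up to permutation, a tensor of $1$-qubit states — trivially non-contextual — and $2$-qubit maximally entangled states. Every $2$-qubit maximally entangled state is LU-equivalent to $\Phi^{+}$; possibilistic contextuality is LU-invariant (by the argument given in the excerpt after Proposition~\ref{Dickeprop}: a local unitary merely relabels the model, preserving all probabilities and hence the whole support structure, and every observable choice for $U_1\otimes\cdots\otimes U_n|\psi\rangle$ arises from one for $|\psi\rangle$); and the computer check in the proof of Theorem~\ref{bellth} is precisely the statement that $\Phi^{+}$ is logically non-contextual for every dichotomic choice of observables. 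Hence, by (1), every state in $P(n)$ is logically non-contextual for all observables, so $P(n)\cap L(n)=\varnothing$.

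For exhaustiveness, take the (essentially unique) maximal decomposition of $|\psi\rangle$ into indecomposable tensor factors. If $|\psi\rangle\notin P(n)$ then some factor $|\phi\rangle$ is either a genuinely multipartite entangled state on $m\geq 3$ qubits, or an entangled $2$-qubit state that is not maximally entangled; in the latter case Hardy's construction \cite{hardy1993nonlocality} supplies observables on that factor witnessing possibilistic contextuality. Measuring every remaining party in a fixed basis gives a one-context, trivially non-contextual model possessing a global section, so (2) lifts contextuality of the distinguished factor to $|\psi\rangle$. This reduces everything to the \emph{core claim}: every genuinely multipartite entangled pure state of three or more qubits is logically contextual.

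The core claim is the main obstacle and the substance of \cite{ACY}; I would prove it by induction on $m$, with base case $m=2$ (Hardy). For the inductive step, given a genuinely $m$-partite entangled $|\phi\rangle$ I would take the Schmidt decomposition $|\phi\rangle=\sqrt{p}\,|0\rangle_1|u_0\rangle+\sqrt{1-p}\,|1\rangle_1|u_1\rangle$ across the bipartition $\{1\}\,|\,\{2,\dots,m\}$ (with $0<p<1$ and $\langle u_0|u_1\rangle=0$ since $|\phi\rangle$ is genuinely entangled), and observe that the conditional states on parties $2,\dots,m$ obtained by measuring party $1$ in an arbitrary basis sweep out the entire projective line $\ell=\mathbb{P}^1$ spanned by $|u_0\rangle$ and $|u_1\rangle$. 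The crux is a rigidity lemma: $\ell$ must meet the ``good'' locus — states that are genuinely $(m-1)$-partite entangled, or, when $m=3$, entangled but not maximally so. Indeed, the complementary ``bad'' locus is a finite union of proper closed subsets (the states product across some bipartition of $\{2,\dots,m\}$, together with the two-qubit maximally-entangled locus when $m=3$); since $\ell$ is irreducible (and connected, the maximally-entangled locus being closed of concurrence $1$ and the product locus of concurrence $0$), it would have to lie entirely inside one of these — but a line inside a Segre variety is a ruling line, which forces $|\phi\rangle$ itself to be a product across a bipartition of its $m$ parties, and a line inside the two-qubit maximally-entangled locus (which is $PU(2)\subseteq\mathbb{P}^3$) would force $|u_0\rangle\parallel|u_1\rangle$; both contradict the hypotheses. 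Given a good conditional state, choose a measurement on party $1$ realising it, apply the inductive hypothesis to obtain observables on parties $2,\dots,m$ making that conditional model possibilistically contextual, and lift a non-extending section to the full model — the lift being valid because both outcome probabilities at party $1$ are strictly positive (the reduced state of party $1$ has full rank), so the conditional support is exactly the branch-restricted full support. With the core claim in hand, disjointness and exhaustiveness yield the partition for every $n\geq 1$; the cases $n=1,2$ are immediate in any event, as $P(1)$ is everything and $L(1)$ empty, while Schmidt decomposition partitions the $2$-qubit pure states into products and maximally entangled states (both in $P(2)$) and Hardy states (in $L(2)$).
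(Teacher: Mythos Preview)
The paper does not actually prove this theorem: it is announced in Section~4 as a result of the forthcoming paper \cite{ACY}, with no argument supplied beyond the remark that the proof is constructive and that ``only $n+2$ local observables are needed for a $n$-qubit state.'' There is therefore no proof here against which to compare your proposal.

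That said, your inductive scheme is consistent with the one quantitative hint the paper does give. Starting from a Hardy argument on a two-qubit factor (two observables per party, hence four in total) and adjoining one further party with a single observable at each inductive step yields exactly $k+2$ observables on a genuinely $k$-partite entangled factor; combined with one fixed observable on each of the remaining $n-k$ parties, this gives $n+2$ observables in total, matching the count attributed to \cite{ACY}. Your tensor-factorisation reduction and the rigidity lemma (that the projective line of post-measurement states cannot lie entirely in a Segre component, or in the maximally-entangled locus when $m=3$, without forcing the parent state to factorise) are sound and are plausibly the intended mechanism; the ruling-line fact for Segre varieties and the concurrence/connectedness argument for $m=3$ both check out.

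There is, however, a genuine gap in your disjointness direction. You invoke Theorem~\ref{bellth} to conclude that $\Phi^{+}$ is never logically contextual, but the computer check in that proof covers only a choice of \emph{two} observables per party. The definition of $L(n)$, and hence the statement $P(n)\cap L(n)=\varnothing$, quantifies over arbitrary finite sets of local observables at each site, so you still owe an argument that $\Phi^{+}$ admits no possibilistic contextuality for three or more settings per side. This is true, but it is not established anywhere in the present paper nor in your sketch, and it does not reduce automatically to the two-setting case.
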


Thus every pure state is \emph{either} a state whose only form of entanglement is bipartite maximal entanglement in 2-qubit subsystems; \emph{or} it is logically contextual.
So logical contextuality, with certain bipartite exceptions, holds in general.


This result can moreover be proved constructively, leading to an algorithm which, given an $n$-qubit state, either returns that it is in $P(n)$, or produces local observables which witness the logical contextuality of the state. Strikingly, only $n+2$ local observables are needed for a $n$-qubit state.

\bibliographystyle{eptcs}

\end{document}